\theoremstyle{plain}
\newtheorem{theorem}{Theorem}[section]
\newtheorem{lemma}[theorem]{Lemma}
\newtheorem{assumption}{Assumption}
\newtheorem{corollary}{Corollary}
\theoremstyle{remark}
\newtheorem{remark}{Remark}
\DeclareMathOperator{\tr}{\text{tr}}
\newcommand{\mle}{\hat\psi}
\newcommand{\Blbrace}{\Big\lbrace}
\newcommand{\Brbrace}{\Big\rbrace}
\newcommand{\Blp}{\Big(}
\newcommand{\Brp}{\Big)}
\newcommand{\sign}{\text{sign}}
\newcommand{\info}{j_{\lambda\lambda}}
\def\underbrace#1{\mathop{...}\limits}
\newcommand{\norm}[1]{\left\lVert#1\right\rVert}
\begin{document}

\begin{frontmatter}
\title{Asymptotic Behaviour of the Modified Likelihood Root}
\runtitle{Asymptotic Behaviour of the Modified Likelihood Root}

\begin{aug}
\author[A, B]{\fnms{Yanbo} \snm{Tang}},
\and
\author[A]{\fnms{Nancy} \snm{Reid}}
\address[A]{Department of Statistical Sciences,
University of Toronto, Toronto, Canada 
 }
\address[B]{Vector Institute, Toronto, Canada
}
\end{aug}

\begin{abstract}
We examine the normal approximation of the modified likelihood root, an inferential tool from higher-order asymptotic theory, for the linear exponential and location-scale family. We show that the $r^\star$ statistic can be thought of as a location and scale adjustment to the likelihood root $r$ up to $O_p(n^{-3/2})$, and more generally $r^\star$ can be expressed as a polynomial in $r$. We also show the linearity of the modified likelihood root in the likelihood root for these two families. 
\end{abstract}

\end{frontmatter}

\section{ Introduction}

The use of $p$-values, although sometimes controversial, has become a key part of modern statistical science, they are the building block of various multiple testing correction procedures used in statistical genetics, where millions of hypotheses are simultaneously considered. 
In most circumstances $p$-values are not exact but are calculated from the limiting distribution of a test statistic. 
The usual test statistics provided in statistical software, such as the likelihood ratio test, Wald test and score test, all have a common known limiting distribution and are accurate to the first order, meaning that the approximation error behaves like $O(n^{-1/2})$. 
However, in the small sample setting or when the number of nuisance parameters is high relative to the number of observations, this trio of tests may perform poorly. 
An improved test statistic, $r^\star$, a modified version of the likelihood root, can be used for likelihood-based inference for scalar parameters of interest. It produces more accurate $p$-values compared to the first order approximations of the test statistics. 
The accuracy of the $p$-values generated by $r^\star$ can be quite remarkable; in some models the approximation is accurate with very few observations \citep[~\S3.2]{2007applied}.

Given the importance that $p$-values play in statistical inference, the exact mechanism through which $r^\star$ generates more accurate $p$-values warrants a careful examination.
We provide insight into the behaviour of $r^\star$ by expressing it as a formal asymptotic expansion, showing that it is asymptotically linear the likelihood root, which we introduce below.

We assume the data $y = (y_1, \cdots, y_n)^\top$ are generated independently from a model parametrized by $ \theta = (\psi, \lambda)$ where $\psi$ is a scalar parameter of interest, $\lambda$ is the nuisance parameter and $\hat\lambda_{\psi}$ denotes the constrained maximum likelihood estimate; i.e. the value of $\lambda$ that maximises the log-likelihood function for a fixed $\psi$.
We denote the log-likelihood function by $l(\psi,\lambda; y)$, and the
the data generating parameter by $\theta_0 = (\psi_0, \lambda_0)$.
The profile log-likelihood function,
\[l_\text{p}(\psi; y) :=  \sup_{\lambda} l(\psi, \lambda; y) = l(\psi, 
\hat\lambda_{\psi}; y) , \]
accounts for the presence of nuisance parameters through constrained maximization.  Under suitable regularity conditions \cite[~\S 3.4]{barndorff1994asymptotic},
\[ w(\psi_0; y) := 2 \lbrace l_\text{p}(\hat\psi) - l_\text{p}(\psi_0) \rbrace \xrightarrow{d} \chi^2_1, \]
where $\chi^2_1$ is a random variable distributed as chi-squared with one degree of freedom.
Equivalently, the log-likelihood root
\begin{align}
    r(\psi_0; y) &:= \sign(\hat\psi - \psi_0)[2\lbrace l_\text{p}(\hat\psi; y) - l_\text{p}(\psi_0; y)  \rbrace]^{\frac{1}{2}} \xrightarrow{d} Z, \label{eq:r}
\end{align}
where the random variable $Z$ has the standard normal distribution \cite[~\S 2.3]{barndorff1994asymptotic}. 

By adding a correction term to $r$, we obtain the modified likelihood root 
\begin{align}\label{eq:intro+rstar}
 r^\star(\psi_0; y) = r(\psi_0; y)  + \frac{1}{r(\psi_0; y)}\log\left\lbrace \frac{q(\psi_0; y)}{r(\psi_0; y)} \right\rbrace;
\end{align}
the form of $q$ depends on the model. 
It has been shown under regularity conditions that the normal approximation to the distribution of $r^\star$ is accurate to $O(n^{-3/2})$ \citep{barndorff1994asymptotic}, whereas the normal approximation to $r$ is only accurate to $O(n^{-1/2})$.   

The mechanism through which $r^\star$ achieves this accuracy is not entirely transparent.
\cite{CAKMAK1998211} show that in models with no nuisance parameters, the leading term in the adjustment factor $r^\star$ is a location and scale correction to $r$. 
We generalize this result to models with nuisance parameters. 
We then use a formal expansion to discuss the asymptotic behaviour of $r^\star$ when the number of parameters increases with the number of observations. 
We show that the adjustment factor $(1/r)\log(q/r)$ is potentially of the same asymptotic order as $r$ in the high-dimensional regime, agreeing with results in \cite{Tang_Reid}. 

Though similar techniques as in the proof of the main theorems. We also show that $r = q  + q^2A/n^{1/2} + q^3B/n + O_p(n^{-3/2}) $, in the linear exponential and location scale families, where $A$ and $B$ are $O_p(1)$, which may depend on $y$ and $\theta_0$. 
This result underlies the proof that the normal approximation to the distribution of $r^\star$ has relative error $O(n^{-3/2})$ \citep[~\S8.5]{2007applied}, but to our knowledge this has not been established in the vector parameter setting.

We focus our analysis on the location-scale and linear exponential family, as the expression for $q$ in (\ref{eq:intro+rstar}) are explicitly available.
We begin with background details on $r^\star$, the linear exponential family and the location-scale family in \S \ref{Sec2}.
We present out main theorems in \S \ref{Sec3}, briefly discuss the impact of the results if $p$ scales with $n$ in \S\ref{Sec4} and show that $r$ can be expressed as a third-order polynomial in $q$ in \S\ref{Sec5}.   
We present some simulations in \S\ref{Sec6} which illustrate the accuracy of the approximations to $r^\star$ and conclude with some additional proof details in \S\ref{Appendix}.  

\subsection{Notation}
Derivatives of the log-likelihood function are denoted by subscripts placed on ``$l$", for example $l_{\psi\lambda\lambda}(\theta)$ represents the matrix with components are $ [l_{\psi\lambda\lambda}(\theta)]_{rs} =  \partial^3 l(\theta)/\partial\psi\partial\lambda_r\partial\lambda_s$. 
We use $j$ to denote the observed information matrix, $j(\theta) = - l_{\theta\theta}(\theta)$; subscripts placed on $j$ denote sub-matrices of $j$ and we let $j_p(\psi) = -d^2 l_{\text{p}}(\psi)/d\psi^2$.
A tilde placed on any quantity denotes that it is evaluated at the constrained maximum likelihood estimate, $(\psi, \hat\lambda_\psi)$, for example $\tilde{\jmath}_{\lambda\lambda} = j_{\lambda\lambda}(\psi, \hat\lambda_\psi)$,
and a hat denotes that it is evaluated at the global maximum likelihood estimate, $\hat\theta$,  
thus $\hat{\jmath} =  j(\hat\psi, \hat\lambda) = j(\hat\theta)$.

We use $d/d\psi$ to denote the total derivative with respect to $\psi$ and $\partial/\partial\psi$ to denote the partial derivative with respect to $\psi$.
The $k$-th derivative of the profile log-likelihood is $\zeta_{k} (\psi) = d^k l_p(\psi)/d \psi^k$ and $k$-th total derivative of the log determinant of the information matrix is 
\[ \gamma_k(\psi) =  d^k \log \lbrace|\info (\psi, \hat \lambda_{\psi})| \rbrace/d\psi^k.\]
We define the $k$-th quasi-cumulant of the profile log-likelihood function as:
\begin{align}
    \kappa_k(\psi) = \frac{\zeta_k(\psi)}{\lbrace -\zeta_2(\psi)  \rbrace^{k/2}  }. \label{eq:kappa}
\end{align}  
In the sequel we suppress the dependence of functions on the data $y$ in the notation, for example $r(\psi_0;  y)$ will simply be written as $r(\psi_0)$. Finally $\sigma_{\max}(A)$ is the maximum singular value of a matrix $A$. The following inequalities will prove useful, for two square matrices $A, B$ and $B$ positive definite:
\begin{align*}
|\tr(AB)| \leq \norm{A}_F \norm{B}_F \quad \text{and,} \quad |\tr(AB)| \leq \sigma_{\max}(A) \tr(B) ,
\end{align*}
where $\norm{A}_F = (\sum_{i,j} A_{ij}^2)^{1/2}$ is the Frobenius norm and the latter inequality is a consequence of the von Neumann trace inequality \citep{trace}. For a vector $v$, we let $\norm{v}_p$ denote its $L^p$ norm.

\section{Background and Assumptions}\label{Sec2}
We assume the following conditions on the model:

\begin{assumption}\label{ass:smooth}
The $k$-th order partial derivatives of the log-likelihood function with respect to the elements of $\theta$ are $O_p(n)$ for all integers $k > 1$ when evaluated at $\hat\theta$.
\end{assumption}

\begin{assumption}\label{ass:consistency}
$\hat\theta - \theta_0 = O_p(n^{-1/2})$
\end{assumption}
\begin{assumption}\label{ass:eigen} 
The eigenvalues of $\hat\jmath/n$, $j(\hat{\theta}_{\psi_0})/n$, $ n \{\hat\jmath\}^{-1}$ and $ n j^{-1}(\hat{\theta}_{\psi_0})$ are positive and $O_p(1)$.
\end{assumption}

Assumption \ref{ass:smooth} ensures that the likelihood derivatives grow at the usual rate when evaluated at the maximum likelihood estimate. Assumption \ref{ass:consistency} states that the maximum likelihood estimate is $n^{1/2}$ consistent for the true parameter value: this rate of consistency is typically achieved for most well behaved parametric models \citep[~\S5]{vaart_1998}. Finally Assumption \ref{ass:eigen} ensures that the asymptotic covariance matrix for the maximum likelihood estimate is well defined. 
For the regression problems that we consider, Assumption \ref{ass:eigen} is equivalent to a restriction on the eigenvalues on the Gramian matrix, $X^\top X$,  as well as a lower bound on the variance of the fitted values.

\subsection{Linear Exponential Family}
Let $X$ be a $n \times p$ matrix of covariates, $x_{i,j}$ denote the $(i,j)^{th}$ entry in $X$, and $x^\top_i$ denote the $i$-th row of $X$. 
We assume the density of $y_i$  is a full exponential family model with log-likelihood function
\begin{align}
      \log\lbrace f(y_i; \psi, \lambda, x_i) \rbrace =  \psi u(x_{i,p},y_i) + \lambda^\top v(x_i,y_i) - c_i(\psi,\lambda) + h(x_i,y_i)  \label{exp},
 \end{align}
 where $u(x_{i,p},y_i)$ is a scalar sufficient statistic associated with $\psi$ and 
 \[v(x_i,y_i) = \lbrace v_1(x_{i,1}, y_i), \cdots, v_{p-1}(x_{i,p-1}, y_i) \rbrace^\top,\]
 is the vector of sufficient statistics associated with the nuisance parameters.
 
In this model, $q$ in (\ref{eq:intro+rstar}) takes the form
\begin{align*}
q(\psi_0)  = t(\psi_0)\rho(\psi_0), 
\end{align*}
where
\begin{align}
t &=  (\hat\psi - \psi_0) j_{\text{p}}^{1/2}(\hat\psi), \label{wald_stat}
\end{align}
where $t$ is the Wald statistic for testing $\psi = \psi_0$,
\begin{align*}
\rho &= \{|j_{\lambda\lambda}(\hat{\theta})|/|j_{\lambda\lambda}(\hat\theta_{\psi_0})|\}^{1/2}, 
\end{align*}
where $j_{\lambda\lambda}(\theta)$ is the $(p - 1) \times (p-1)$ sub-matrix of $j(\theta)$ associated with the nuisance parameters \citep[~\S8.6.1]{2007applied}. 
We follow \cite{peters} and write
\begin{align*}
    r^\star =r  + r_{np} + r_{inf},
\end{align*}
where $r_{np}$ is a nuisance parameter adjustment and $r_{inf}$ is an information adjustment.
This partitioning of the adjustments will be helpful to the analysis of the asymptotic behaviour of $r^\star$.
For this model,
\begin{align}
 r_{np} =  \frac{1}{r} \log(\rho) \text{,} \quad 
r_{inf} &= \frac{1}{r} \log(\frac{t}{r}). \label{eq:adj_def}
\end{align}

\subsection{Location-Scale Family}
For a linear regression model based on the location scale-family, the model is
\begin{align}
    y_i =  x^\top_i \beta + \sigma\epsilon_i, \label{locscale}
\end{align}
where the errors $\epsilon_i$ are assumed independent and identically distributed from a known distribution with continuous density $f(\epsilon)$. 
The model is parametrized by $ \theta = (\beta, \sigma)$, we assume that the parameter of interest is a component of $\beta$.
For this model 
\begin{align}
q(\psi_0) &= s(\psi_0)/\rho(\psi_0)\nonumber\\ 
s &=  l_{\text{p}}^\prime(\psi_0)/ j_{\text{p}}^{1/2}(\mle), \label{eq:score}
\end{align}
$s$ is the standard score test statistic and $\rho$ is defined above. For this model
\[\quad r_{np} = - \frac{1}{r} \log(\rho), \quad  r_{inf} = \frac{1}{r} \log(\frac{s}{r}). \]

\subsection{General Models}
For general models $r_{np}$ and $r_{inf}$ are more difficult to work with, the additional difficulty lies in the necessity of conditioning on an ancillary statistic.
It may be possible to use certain variants of $r^\star$, where it is written in a similar form to a Wald or a Score statistic \cite[~3.4]{reid2003} and proceed in a similar fashion as in proof of Theorem \ref{th:1} and \ref{th:2}, we do not pursue this however. 


\section{Formal expansions of $r_{inf}$ and $r_{np}$} \label{Sec3}
In this section we obtain formal asymptotic expansions for $r_{inf}$ and $r_{np}$, which detail the relationship between $r$ and $r^\star$ in the linear exponential and location scale families, respectively. Analogous results to Theorem \ref{th:1} and \ref{th:2} were obtained by \cite{CAKMAK1998211} in the case of no nuisance parameters. The expansions for $r_{inf}$ and $r_{np}$ show that $r^\star$ is asymptotically equivalent to a location and scale adjustment to the likelihood root $r$. 

We begin by showing the relationships between $r$, $s$ and $t$.

\begin{lemma}\label{Lemma:1}
Under Assumptions \ref{ass:smooth}--\ref{ass:eigen}, for $r$, $t$ and $s$ defined in (\ref{eq:r}), (\ref{wald_stat}) and (\ref{eq:score}):
\begin{align*}
t = r\left\{ 1 + \frac{A_1}{n^{1/2}} r +  \frac{B_1}{n} r^2 + O_p(n^{-3/2}) \right\}, \\
s = t\left\{ 1 + \frac{A_2}{n^{1/2} }t + \frac{B_2}{n }t^2 + O_p(n^{-3/2})  \right\},  
\end{align*}
where, 
\begin{align*}
A_1 &= - \frac{n^{1/2}}{6} \kappa_3(\hat\psi), \quad B_1 =  \frac{n}{24}\kappa_4(\hat\psi) + \frac{5n}{72} \kappa_3^2(\hat\psi), \\
A_2 &= \frac{n^{1/2}\kappa_3(\hat\psi)}{2}, \quad\quad  B_2 = -\frac{n\kappa_4(\hat\psi)}{6}.
\end{align*}
\end{lemma}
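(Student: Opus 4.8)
The plan is to derive both identities from Taylor expansions of profile-likelihood quantities about the maximum likelihood estimate $\hat\psi$, using the stationarity identity $l_{\text{p}}^\prime(\hat\psi)=0$ to remove the leading term, and then re-expressing every power of $\hat\psi-\psi_0$ through the Wald statistic. Write $\delta = \hat\psi-\psi_0$, so that $\delta = t/j_{\text{p}}^{1/2}(\hat\psi)$ by (\ref{wald_stat}); recall $\zeta_2(\hat\psi) = -j_{\text{p}}(\hat\psi)$ and, from (\ref{eq:kappa}), $\kappa_k(\hat\psi) = \zeta_k(\hat\psi)/\{j_{\text{p}}(\hat\psi)\}^{k/2}$.

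For the first identity I would start from $r^2(\psi_0)/2 = l_{\text{p}}(\hat\psi)-l_{\text{p}}(\psi_0)$ and expand $l_{\text{p}}(\psi_0)$ about $\hat\psi$; the linear term vanishes, leaving
\begin{align*}
r^2 = j_{\text{p}}(\hat\psi)\,\delta^2 + \tfrac13\zeta_3(\hat\psi)\,\delta^3 - \tfrac1{12}\zeta_4(\hat\psi)\,\delta^4 + O_p(n^{-3/2}).
\end{align*}
Dividing by $t^2 = j_{\text{p}}(\hat\psi)\,\delta^2$ and substituting $\delta = t/j_{\text{p}}^{1/2}(\hat\psi)$ rewrites the right-hand side as $1 + \tfrac13\kappa_3(\hat\psi)\,t - \tfrac1{12}\kappa_4(\hat\psi)\,t^2 + O_p(n^{-3/2})$; taking square roots with the binomial series gives $r = t\{1 + \tfrac16\kappa_3 t - (\tfrac1{24}\kappa_4 + \tfrac1{72}\kappa_3^2)\,t^2 + O_p(n^{-3/2})\}$, the sign being fixed since $\sign(r) = \sign(t) = \sign(\delta)$. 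Inverting this power series in $r$ by matching coefficients order by order then yields $A_1 = -n^{1/2}\kappa_3(\hat\psi)/6$ and $B_1 = n\kappa_4(\hat\psi)/24 + 5n\kappa_3^2(\hat\psi)/72$.

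For the second identity I would expand $l_{\text{p}}^\prime(\psi_0)$ about $\hat\psi$, again dropping the vanishing leading term, to get $l_{\text{p}}^\prime(\psi_0) = j_{\text{p}}(\hat\psi)\,\delta + \tfrac12\zeta_3(\hat\psi)\,\delta^2 - \tfrac16\zeta_4(\hat\psi)\,\delta^3 + O_p(n^{-1})$. Dividing by $j_{\text{p}}^{1/2}(\hat\psi)$ as in (\ref{eq:score}) and substituting $\delta = t/j_{\text{p}}^{1/2}(\hat\psi)$ collapses this directly to $s = t\{1 + \tfrac12\kappa_3(\hat\psi)\,t - \tfrac16\kappa_4(\hat\psi)\,t^2 + O_p(n^{-3/2})\}$, i.e. the claim with $A_2 = n^{1/2}\kappa_3(\hat\psi)/2$ and $B_2 = -n\kappa_4(\hat\psi)/6$; no series inversion is needed here.

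The main obstacle is not the algebra but the bookkeeping that makes the expansions rigorous: showing that the displayed coefficients are $O_p(1)$ and the remainders genuinely $O_p(n^{-3/2})$. This rests on three facts. First, $\delta = O_p(n^{-1/2})$ by Assumption \ref{ass:consistency}. Second, the profile total derivatives satisfy $\zeta_k(\hat\psi) = O_p(n)$ and $j_{\text{p}}(\hat\psi)$ is of exact order $n$; this is obtained by writing $l_{\text{p}}^\prime(\psi) = l_\psi(\psi,\hat\lambda_\psi)$, differentiating repeatedly, and using $d\hat\lambda_\psi/d\psi = -\{j_{\lambda\lambda}(\psi,\hat\lambda_\psi)\}^{-1}l_{\lambda\psi}(\psi,\hat\lambda_\psi)$ from implicit differentiation of the constrained score equation — so that in particular $j_{\text{p}}(\psi)$ equals the Schur complement $j_{\psi\psi} - j_{\psi\lambda}j_{\lambda\lambda}^{-1}j_{\lambda\psi}$ — together with Assumptions \ref{ass:smooth} and \ref{ass:eigen}. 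Third, the Taylor remainders require a bound on $\zeta_5$ uniform over the segment joining $\psi_0$ and $\hat\psi$, which follows from Assumption \ref{ass:smooth} and the fact that this segment lies within an $O_p(n^{-1/2})$ neighbourhood of $\hat\theta$ by Assumption \ref{ass:consistency}. Granting these, $\kappa_3(\hat\psi) = O_p(n^{-1/2})$ and $\kappa_4(\hat\psi) = O_p(n^{-1})$, so $A_1,B_1,A_2,B_2 = O_p(1)$, and since $r,t = O_p(1)$ every discarded term is $O_p(n^{-3/2})$, which gives the lemma.
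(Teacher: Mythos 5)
Your proposal is correct and follows essentially the same route as the paper: Taylor expansion of $l_{\text{p}}$ and $l_{\text{p}}^\prime$ about $\hat\psi$, rewriting in terms of $t$ via $\hat\psi-\psi_0 = t/j_{\text{p}}^{1/2}(\hat\psi)$, and a series inversion for the first identity (the paper inverts by iterative substitution rather than coefficient matching, but this is the same computation). Your added bookkeeping on the orders of $\zeta_k$, $\kappa_k$ and the remainders matches what the paper defers to Lemma \ref{lemma:constrained_deriv} in \S\ref{Appendix}.
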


\begin{proof}
We begin by deriving the relationship between $r$ and $t$, defined in (\ref{eq:r}) and (\ref{wald_stat}) :
\begin{align*}
r^2 &= 2\Blbrace l_{\text{p}}(\hat\psi) - l_{\text{p}}(\psi_0) \Brbrace,\\
&= 2\Blbrace l_{\text{p}}(\hat\psi) - l_{\text{p}}(\hat\psi) + ( \hat\psi - \psi_0)\zeta_1(\mle) - \frac{( \mle - \psi_0)^2}{2} \zeta_2(\mle) \\
&\quad + \frac{( \mle - \psi_0)^3}{6} \zeta_3(\mle) + \frac{( \mle - \psi_0)^4}{24} \zeta_4(\mle) +  +O_p(n^{-3/2}) \Brbrace\\
&=t^2 \Blbrace  1 + \frac{\kappa_3(\hat\psi )}{3}t  - \frac{\kappa_4(\hat\psi)}{12}t^2  + O_p(n^{-3/2}) \Brbrace.
\end{align*}
The Taylor-series expansion for $(1 + x)^{1/2}$ gives
\begin{align*}
r &= t\Blbrace  1 + \frac{ \kappa_3(\hat\psi )}{6}t  - \frac{ \kappa_4(\hat\psi)}{24}t^2   - \frac{\kappa^2_3(\hat\psi )}{72}t^2  +O_p(n^{-3/2})\Brbrace,
\end{align*}
which implies
\begin{align}
t &= r \Blbrace  1 + \frac{\kappa_3(\hat\psi )}{6} t  - \frac{ \kappa_4(\hat\psi)}{24} t^2  - \frac{\kappa^2_3(\hat\psi )}{72}t^2  + O_p(n^{-3/2})\Brbrace^{-1}\nonumber \\
&= r \Blbrace 1 - \frac{\kappa_3(\hat\psi )}{6}t  + \frac{\kappa_4(\hat\psi)}{24} t^2  + \frac{\kappa^2_3(\hat\psi )}{72}t^2  + \frac{\kappa^2_3(\hat\psi )}{36}t^2   +O_p(n^{-3/2}) \Brbrace \nonumber \\
&=  r \Blbrace 1 - \frac{\kappa_3(\hat\psi )}{6}t  + \frac{\kappa_4(\hat\psi)}{24}t^2   + \frac{\kappa^2_3(\hat\psi )}{24}t^2  +O_p(n^{-3/2}) \label{r_inf_1} \Brbrace .
\end{align} 
As $t$ appears on both sides of the equation, we iteratively solve the equation by substitution. 
\begin{align}
t &= r\Big[  1 - \frac{\kappa_3(\hat\psi )}{6}   \Blbrace \Blp   1 - \frac{\kappa_3(\hat\psi )}{6}t   \Brp r \Brbrace + \frac{1}{24} \Blbrace \kappa_3^2(\hat\psi) + \kappa_4(\hat\psi) \Brbrace r^2    +O_p(n^{-3/2}) \Big]\label{sub_1} \\
&= r \Big[  1 - \frac{\kappa_3(\hat\psi )}{6}r   + \Blbrace \frac{\kappa^2_3(\hat\psi)}{36}r  \Brbrace t   + \frac{1}{24} \Blbrace \kappa_3^2(\hat\psi) + \kappa_4(\hat\psi) \Brbrace  r^2 +O_p(n^{-3/2}) \Big] \nonumber \\
&= r \Blbrace  1 - \frac{ \kappa_3(\hat\psi )}{6} r   +  \frac{5}{72} \kappa^2_3(\hat\psi) r^2    + \frac{1}{24}  \kappa_4(\hat\psi)r^2  +O_p(n^{-3/2}) \Brbrace \nonumber \\
&= r\Blbrace 1 + \frac{A_1}{n^{1/2}}r + \frac{B_1}{n}r^2  +O_p(n^{-3/2}) \Brbrace, \label{sub_2}
\end{align}
where
\begin{align*}
A_1 =- \frac{n^{1/2}}{6} \kappa_3(\hat\psi), \quad B_1 =  \frac{n}{24}\kappa_4(\hat\psi) + \frac{5n}{72} \kappa_3^2(\hat\psi).
\end{align*}
While for the expansion of $s$, write
\begin{align*}
   s &= \frac{\zeta_1(\psi_0)}{j_p^{1/2}(\mle)} \\
   &= \frac{1}{j_p^{1/2}(\mle) } \Blbrace \zeta_1(\mle) - \zeta_2(\mle)(\mle - \psi_0) +\frac{\zeta_3(\mle)}{2}( \mle - \psi_0)^2 - \frac{\zeta_4(\mle)}{6}( \mle - \psi_0)^3 + O_p(n^{-3/2})\Brbrace\\
    &= t +\frac{\kappa_3(\mle) }{2}t^2 - \frac{\kappa_4(\mle) }{6}t^3 + O_p(n^{-3/2}),\\
    &= t \left\{ 1 + \frac{A_2}{n^{1/2}}t + \frac{B_2}{n}t^2 +O_p(n^{-3/2}) \right\},
\end{align*}
where
\begin{align*}
A_2 = \frac{n^{1/2}}{2} \kappa_3(\mle), \quad B_2 =  -\frac{n}{6}\kappa_4(\mle).
\end{align*}
\end{proof}

\begin{theorem}\label{th:1}
Under Assumptions \ref{ass:smooth}--\ref{ass:eigen},
for the linear exponential family
\begin{align}
r_{np} =  \frac{1}{2}  \frac{\gamma_1(\mle)}{j_{\normalfont\text{p}}(\mle)^{1/2}} - \Blbrace  \frac{1}{12}   \frac{\kappa_3(\mle) \gamma_1(\mle)}{j_{\normalfont\text{p}}(\mle)^{1/2}} - \frac{1}{4}    \frac{\gamma_2(\mle) }{ j_{\normalfont\text{p}}(\mle)}\Brbrace r  
+O_p(n^{-3/2}), \label{eq:exp_np}
 \end{align}
and for the location-scale family 
\begin{align}
r_{np} = -\frac{1}{2}  \frac{\gamma_1(\mle)}{j_{\normalfont\text{p}} (\mle)^{1/2}} + \Blbrace  \frac{1}{12}   \frac{\kappa_3(\mle) \gamma_1(\mle)}{j_{\normalfont \text{p}}(\mle)^{1/2}} - \frac{1}{4}    \frac{\gamma_2(\mle) }{ j_{\normalfont\text{p}}(\mle)}\Brbrace r  
+O_p(n^{-3/2}). \label{eq:ls_np}
\end{align} 
\end{theorem}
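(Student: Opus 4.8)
The starting point is the closed form of the nuisance adjustment: $r_{np} = r^{-1}\log\rho$ in the linear exponential family and $r_{np} = -r^{-1}\log\rho$ in the location-scale family, with $\rho = \{|j_{\lambda\lambda}(\hat\theta)|/|j_{\lambda\lambda}(\hat\theta_{\psi_0})|\}^{1/2}$ in both cases. Writing $g(\psi) := \log|j_{\lambda\lambda}(\psi,\hat\lambda_\psi)|$, so that $g^{(k)}(\psi) = \gamma_k(\psi)$ by the definition of $\gamma_k$, we have $2\log\rho = g(\hat\psi) - g(\psi_0)$. The plan is to Taylor-expand $g(\psi_0)$ about $\hat\psi$, re-express the increment $\hat\psi - \psi_0$ first through the Wald statistic $t = (\hat\psi - \psi_0)j_{\text{p}}^{1/2}(\hat\psi)$ and then through $r$ via Lemma \ref{Lemma:1}, collect terms by asymptotic order, and divide by $r$. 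Since the two families differ only in the leading sign of $r^{-1}\log\rho$, the displays (\ref{eq:exp_np}) and (\ref{eq:ls_np}) come out of a single computation.

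In detail, $\hat\psi - \psi_0 = O_p(n^{-1/2})$ by Assumption \ref{ass:consistency} and, under Assumptions \ref{ass:smooth}--\ref{ass:eigen} with $p$ fixed, $\gamma_k(\hat\psi) = O_p(1)$, so the cubic term of the Taylor expansion is $O_p(n^{-3/2})$ and
\[
2\log\rho = (\hat\psi - \psi_0)\,\gamma_1(\hat\psi) - \tfrac12(\hat\psi - \psi_0)^2\,\gamma_2(\hat\psi) + O_p(n^{-3/2}).
\]
Substituting $\hat\psi - \psi_0 = t/j_{\text{p}}^{1/2}(\hat\psi)$, then $t = r - \tfrac16\kappa_3(\hat\psi)\,r^2 + O_p(n^{-1})$ from Lemma \ref{Lemma:1} (whence $t^2 = r^2 + O_p(n^{-1/2})$), and using $\gamma_1(\hat\psi)/j_{\text{p}}^{1/2}(\hat\psi) = O_p(n^{-1/2})$ and $\gamma_2(\hat\psi)/j_{\text{p}}(\hat\psi) = O_p(n^{-1})$ — here $j_{\text{p}}(\hat\psi) \asymp n$ follows from Assumption \ref{ass:eigen} and the identity $j_{\text{p}}(\hat\psi) = \hat{\jmath}_{\psi\psi} - \hat{\jmath}_{\psi\lambda}\hat{\jmath}_{\lambda\lambda}^{-1}\hat{\jmath}_{\lambda\psi}$ — every truncation error is $O_p(n^{-3/2})$; collecting the surviving $r$ and $r^2$ terms in $\log\rho$ and dividing by $r$ then gives (\ref{eq:exp_np}), and replacing $\log\rho$ by $-\log\rho$ gives (\ref{eq:ls_np}).

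The one step that is not pure Taylor bookkeeping is the bound $\gamma_k(\hat\psi) = O_p(1)$, which I expect to be the main obstacle. Because $\gamma_k$ is a \emph{total} derivative of $\log|j_{\lambda\lambda}(\psi,\hat\lambda_\psi)|$, its expansion drags in the implicit derivatives $d^m\hat\lambda_\psi/d\psi^m$; I would control these by repeatedly differentiating the constrained score identity $l_\lambda(\psi,\hat\lambda_\psi) = 0$ and inverting $\tilde{\jmath}_{\lambda\lambda}$ (controlled by Assumptions \ref{ass:smooth} and \ref{ass:eigen}), obtaining $d^m\hat\lambda_\psi/d\psi^m = O_p(1)$, and then apply the Jacobi formula for the derivatives of a log-determinant together with the trace inequalities recorded in the Notation section and Assumption \ref{ass:smooth}. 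Finally, as in \cite{CAKMAK1998211}, the concluding division by $r$ is the customary formal step, so the displayed $O_p(n^{-3/2})$ is to be read as a statement about the coefficients of the expansion of $r_{np}$ in powers of $r$.
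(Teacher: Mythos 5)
Your proposal follows essentially the same route as the paper's own proof: Taylor-expand $\log|j_{\lambda\lambda}(\psi,\hat\lambda_\psi)|$ about $\hat\psi$, rewrite $\hat\psi-\psi_0$ through $t$ and then through $r$ via Lemma \ref{Lemma:1}, divide by $r$, and flip the sign for the location-scale case; the supporting bound $\gamma_k(\hat\psi)=O_p(1)$ that you flag as the main obstacle is precisely the paper's Lemma \ref{lemma:constrained_deriv}, obtained exactly as you describe by differentiating the constrained score identity. The only caution is to track the sign convention $\zeta_2(\hat\psi)=-j_{\text{p}}(\hat\psi)$ when writing the $\gamma_2$ coefficient, since the paper's displayed proof ends with $\gamma_2(\hat\psi)/\{4\zeta_2(\hat\psi)\}$, whose sign must be reconciled with the form quoted in the theorem.
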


\begin{theorem}\label{th:2}
Under Assumptions \ref{ass:smooth}--\ref{ass:eigen},
for the linear exponential family, 
\begin{align}
r_{inf} = -\frac{1}{6}\kappa_3(\mle) + \Blbrace \frac{1}{24}\kappa_4(\mle) + \frac{4}{72}\kappa_3^2(\mle) \Brbrace r  +O_p(n^{-3/2}),  \label{eq:exp_inf}
\end{align}

and for the location-scale family 
\begin{align}
r_{inf} = \frac{1}{3}\kappa_3(\mle) - \Blbrace \frac{3}{24}\kappa_4(\mle) + \frac{11}{72}\kappa_3^2(\mle) \Brbrace r  +O_p(n^{-3/2}) . \label{eq:ls_inf}
\end{align}

\end{theorem}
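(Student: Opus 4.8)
The plan is to substitute the two relations of Lemma~\ref{Lemma:1} into the definitions of $r_{inf}$ given in (\ref{eq:adj_def}) and in the location-scale subsection, expand the logarithm, and then divide by $r$. Throughout I would use that $\kappa_3(\mle) = O_p(n^{-1/2})$ and $\kappa_4(\mle) = O_p(n^{-1})$, so that $A_1, A_2 = O_p(1)$ and $B_1, B_2 = O_p(1)$ after the $n$-normalizations written in Lemma~\ref{Lemma:1}, while $r = O_p(1)$ by (\ref{eq:r}). For the linear exponential family $r_{inf} = r^{-1}\log(t/r)$, and Lemma~\ref{Lemma:1} gives $t/r - 1 = n^{-1/2}A_1 r + n^{-1}B_1 r^2 + O_p(n^{-3/2}) = O_p(n^{-1/2})$, so $\log(1+x) = x - x^2/2 + O(x^3)$ yields
\[
\log(t/r) = \frac{A_1}{n^{1/2}}\,r + \Bigl(\frac{B_1}{n} - \frac{A_1^2}{2n}\Bigr) r^2 + O_p(n^{-3/2}).
\]
Dividing by $r$ and inserting $A_1 = -\tfrac{n^{1/2}}{6}\kappa_3(\mle)$ and $B_1 = \tfrac{n}{24}\kappa_4(\mle) + \tfrac{5n}{72}\kappa_3^2(\mle)$ gives the leading term $-\tfrac16\kappa_3(\mle)$ and a coefficient of $r$ equal to $\tfrac1{24}\kappa_4(\mle) + \tfrac5{72}\kappa_3^2(\mle) - \tfrac1{72}\kappa_3^2(\mle) = \tfrac1{24}\kappa_4(\mle) + \tfrac4{72}\kappa_3^2(\mle)$, which is (\ref{eq:exp_inf}).

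For the location-scale family $r_{inf} = r^{-1}\log(s/r)$, and I would factor $s/r = (s/t)(t/r)$, so that $\log(s/r) = \log(s/t) + \log(t/r)$. Using $t = r + n^{-1/2}A_1 r^2 + O_p(n^{-1})$, hence $t^2 = r^2 + O_p(n^{-1/2})$, the second relation of Lemma~\ref{Lemma:1} becomes $s/t - 1 = n^{-1/2}A_2 r + n^{-1}(A_1 A_2 + B_2) r^2 + O_p(n^{-3/2})$, and expanding the logarithm,
\[
\log(s/t) = \frac{A_2}{n^{1/2}}\,r + \Bigl(\frac{A_1 A_2}{n} + \frac{B_2}{n} - \frac{A_2^2}{2n}\Bigr) r^2 + O_p(n^{-3/2}).
\]
Adding the expansion of $\log(t/r)$ above, dividing by $r$, and substituting $A_1 = -\tfrac{n^{1/2}}{6}\kappa_3$, $A_2 = \tfrac{n^{1/2}}{2}\kappa_3$, $B_1$ as above and $B_2 = -\tfrac n6\kappa_4$ yields leading term $n^{-1/2}(A_1+A_2) = \tfrac13\kappa_3(\mle)$ and coefficient of $r$ equal to $\tfrac1{24}\kappa_4 + \tfrac5{72}\kappa_3^2 - \tfrac1{72}\kappa_3^2 - \tfrac1{12}\kappa_3^2 - \tfrac16\kappa_4 - \tfrac18\kappa_3^2 = -\tfrac3{24}\kappa_4 - \tfrac{11}{72}\kappa_3^2$ (all at $\mle$), which is (\ref{eq:ls_inf}).

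\emph{The step that needs care} is the division by $r$: since $r \to 0$ as $\psi_0 \to \mle$, a bare $O_p(n^{-3/2})$ term cannot be divided by $r$ while preserving that order. This is nonetheless legitimate here because every remainder that appears---both in Lemma~\ref{Lemma:1} and in the Taylor expansion of $\log(1+x)$---carries an explicit factor of $r$ (or of $t$, which equals $r\{1 + O_p(n^{-1/2})\}$), so that $\log(t/r)$ and $\log(s/r)$ are divisible by $r$ and $r_{inf}$ extends to a smooth function of $r$ at $r = 0$; the displayed identity is then a stochastic asymptotic expansion in $r$ whose coefficients have the stated orders, with the $r$-coefficient genuinely $O_p(n^{-1})$ precisely because $\kappa_3^2(\mle)$ and $\kappa_4(\mle)$ are both $O_p(n^{-1})$, which also justifies collecting the three terms in each bracket at a single order. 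The remaining work is the bookkeeping of which products of $A_i, B_i$ and $r$ land at orders $n^{-1/2}$, $n^{-1}$ and $n^{-3/2}$; everything past that is the routine arithmetic indicated above.
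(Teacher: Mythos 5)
Your proposal is correct and follows essentially the same route as the paper: substitute the expansions of Lemma~\ref{Lemma:1} into $r_{inf}=r^{-1}\log(t/r)$ (resp.\ $r^{-1}\log(s/r)$), expand the logarithm to second order, and divide by $r$; the coefficients you obtain, $-\tfrac16\kappa_3(\mle)$ with $r$-coefficient $\tfrac1{24}\kappa_4(\mle)+\tfrac4{72}\kappa_3^2(\mle)$ and $\tfrac13\kappa_3(\mle)$ with $r$-coefficient $-\tfrac3{24}\kappa_4(\mle)-\tfrac{11}{72}\kappa_3^2(\mle)$, agree with (\ref{eq:exp_inf}) and (\ref{eq:ls_inf}). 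The only cosmetic difference is that for the location-scale case you split $\log(s/r)=\log(s/t)+\log(t/r)$ while the paper nests the substitutions inside a single logarithm, and your closing remark on why division by $r$ preserves the order of the remainder is a point the paper leaves implicit.
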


\begin{proof}
\textbf{Linear Exponential Family:}
Using (\ref{eq:adj_def}) and Lemma \ref{Lemma:1} for a linear exponential family we have
\begin{align}
r_{inf} &= \frac{1}{r} \log\Blp \frac{t}{r } \Brp  \nonumber \\
&= \frac{A_1}{n^{1/2}} + \left(\frac{B_1}{n} - \frac{A_1^2}{2n} \right)r  +O_p(n^{-3/2}) \nonumber \\
&= -\frac{1}{6}\kappa_3(\mle) + \Blbrace \frac{1}{24}\kappa_4(\mle) + \frac{4}{72}\kappa_3^2(\mle) \Brbrace r  +O_p(n^{-3/2}) \label{r_inf}.
\end{align}
A similar expansion can be developed for $r_{np}$:
\begin{align}
r_{np} &=  \frac{1}{2r} \log\Blbrace  \frac{|j_{\lambda\lambda}(\hat\psi, \hat\lambda)| }{ |j_{\lambda\lambda}(\psi_0, \hat\lambda_{\psi_0} ) |} \Brbrace \nonumber \\
&=\frac{1}{2r} \left[ \frac{\gamma_1(\mle)}{ \lbrace-\zeta_2(\mle)\rbrace^{1/2} } t+  \frac{\gamma_2(\mle)}{2\zeta_2(\mle)} t^2 + O_p\Blp n^{-3/2} \Brp \right]  \label{eq:sub_step} \\
& = \frac{1}{2} \left[ \Blp 1 + \frac{A_1}{n^{1/2}} r \Brp \frac{\gamma_1(\mle)}{\lbrace-\zeta_2(\mle)\rbrace^{1/2} } + \frac{\gamma_2(\mle)}{2\zeta_2(\mle)}r + O_p\Blp n^{-3/2} \Brp \right] \nonumber \\
&=  \frac{1}{2} \frac{\gamma_1(\mle)}{ \lbrace -\zeta_2(\mle) \rbrace^{1/2}} + \left[ \frac{1}{2} \frac{A_1\gamma_1(\mle)}{ \lbrace -\zeta_2(\mle) n \rbrace^{1/2} } + \frac{\gamma_2(\mle)}{4\zeta_2(\mle)}  \right] r  + O_p\Blp n^{-3/2}\Brp \label{r_np},
\end{align}
where the third equality uses Lemma \ref{Lemma:1}.

\textbf{Location-Scale Family:}
Using Lemma \ref{Lemma:1}, we obtain that for the location-scale family:
\begin{align*}
r_{inf} &= \frac{1}{r} \log\Big[ \frac{1}{r } \Blbrace t + \frac{\kappa_3(\mle) }{2}t^2 -  \frac{\kappa_4(\mle) }{6}t^3 + O_p(n^{-3/2}) \Brbrace  \Big]\\
&= \frac{1}{r} \log\Big[ 1 + \frac{A_2}{n^{1/2}}r + \frac{B_2}{n} r^2 + \frac{\kappa_3(\mle)}{2}  \Blbrace 1 + \frac{A_2}{n^{1/2}}r \Brbrace^2 r - \frac{\kappa_4(\mle) }{6}r^2  + O_p(n^{-3/2})  \Big] \\
&= \frac{1}{3}\kappa_3(\mle) - \Blbrace\frac{3}{24}\kappa_4(\mle) + \frac{11}{72}\kappa_3^2(\mle)\Brbrace r + O_p(n^{-3/2}) .
\end{align*}
The expansion for the nuisance parameter adjustment $r_{np}$ is the same as in the exponential family, except for a change in sign.
   
\end{proof}

\begin{remark} \label{Remark:linearity}
From Theorems \ref{th:1} and \ref{th:2}, for the linear exponential family, 
\begin{align*}
r^\star &= -\frac{1}{6}\kappa_3(\mle) + \frac{1}{2}  \frac{\gamma_1(\mle)}{\lbrace- \zeta_2(\mle)\rbrace^{1/2}} 
\\
&+ \left[ 1 + \frac{1}{24}\kappa_4(\mle) + \frac{4}{72}\kappa_3^2(\mle) -   \frac{1}{12}\frac{\kappa_3(\mle) \gamma_1(\mle)}{ \lbrace - \zeta_2(\mle) \rbrace^{1/2}} - \frac{1}{4}    \frac{\gamma_2(\mle)}{ \zeta_2(\mle)}\right] r  
+O_p(n^{-3/2}),
\end{align*}
and for the location-scale family,
\begin{align*}
r^\star &= \frac{1}{3}\kappa_3(\mle) - \frac{1}{2}  \frac{\gamma_1(\mle)}{\lbrace- \zeta_2(\mle)\rbrace^{1/2}} 
\\
&+ \left[ 1 - \frac{3}{24}\kappa_4(\mle) - \frac{11}{72}\kappa_3^2(\mle) + \frac{1}{12}\frac{\kappa_3(\mle) \gamma_1(\mle)}{ \lbrace - \zeta_2(\mle) \rbrace^{1/2}} + \frac{1}{4}    \frac{\gamma_2(\mle)}{ \zeta_2(\mle)}\right] r  
+O_p(n^{-3/2}).
\end{align*}
After some algebraic manipulations using the above equations, we obtain for both families:
\begin{align*}
r^\star =  \frac{ r - \tilde{A}/n^{1/2}}{ \{ 1 + \tilde{B}/n \}} + O_p(n^{-3/2}), 
\end{align*}
which shows that $r^\star$ is a shift and rescaling of $r$, where $\tilde{A}$ and $\tilde{B}$ are $O_p(1)$ terms.

\end{remark}

\begin{corollary}
Under Assumptions \ref{ass:smooth}--\ref{ass:eigen},
the expansions of $r_{inf}$ and $r_{np}$ can be extended to:
\begin{align} 
 r_{inf} =  \sum_{k = 1}^m \frac{A_{k}}{n^{k/2}} r^{k-1} + O_p(n^{-(m+1)/2}), \quad r_{np} = \sum_{k = 1}^m \frac{B_k}{n^{k/2}} r^{k-1} + O_p(n^{-(m+1)/2}), \label{eq:cor_1}
\end{align}
for arbitrary $m \in \mathbb{N}$
where the coefficients $A_k = O_p(1)$ and $B_k = O_p(1)$.  
\end{corollary}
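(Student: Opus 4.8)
The plan is to lift Lemma~\ref{Lemma:1} from third order to arbitrary order $m$ and then push the resulting expansions through the two logarithms that define $r_{inf}$ and $r_{np}$, following the route of the proofs of Theorems~\ref{th:1} and~\ref{th:2}. Taylor expanding $r^2 = 2\{l_{\text{p}}(\hat\psi) - l_{\text{p}}(\psi_0)\}$ about $\hat\psi$ to order $m+2$ in $\hat\psi - \psi_0$, using $\zeta_1(\hat\psi) = 0$ together with Assumptions~\ref{ass:smooth}--\ref{ass:eigen} (so that $\zeta_j(\hat\psi) = O_p(n)$, $\hat\psi - \psi_0 = O_p(n^{-1/2})$, $t = (\hat\psi - \psi_0) j_{\text{p}}^{1/2}(\hat\psi) = O_p(1)$ and $\kappa_j(\hat\psi) = O_p(n^{-(j-2)/2})$ for $j \ge 3$), and substituting $(\hat\psi - \psi_0)^{j-2} = t^{j-2}\{-\zeta_2(\hat\psi)\}^{-(j-2)/2}$, one gets
\begin{align*}
r^2 = t^2\Blbrace 1 + \sum_{j=3}^{m+2}\frac{2(-1)^{j+1}}{j!}\kappa_j(\hat\psi) t^{j-2} + O_p(n^{-(m+1)/2})\Brbrace ,
\end{align*}
and the same expansion applied to $\zeta_1(\psi_0)/j_{\text{p}}^{1/2}(\hat\psi)$ gives
\begin{align*}
s = t\Blbrace 1 + \sum_{j=3}^{m+2}\frac{(-1)^{j-1}}{(j-1)!}\kappa_j(\hat\psi) t^{j-2} + O_p(n^{-(m+1)/2})\Brbrace .
\end{align*}

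Next I would take square roots through the binomial series and solve for $t$ by the iterated substitution used to pass from (\ref{r_inf_1}) to (\ref{sub_2}); since each round raises the order in $n^{-1/2}$, the procedure stabilises after at most $m$ rounds and produces
\begin{align*}
t = r\Blbrace 1 + \sum_{k=1}^{m}\frac{a_k}{n^{k/2}} r^{k} + O_p(n^{-(m+1)/2})\Brbrace , \qquad s = t\Blbrace 1 + \sum_{k=1}^{m}\frac{\tilde a_k}{n^{k/2}} t^{k} + O_p(n^{-(m+1)/2})\Brbrace ,
\end{align*}
with $a_k, \tilde a_k = O_p(1)$, after which composing the two expresses $s$ as a polynomial in $r$ of the same form. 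The bound $a_k = O_p(1)$ is a homogeneity count: the $r^{k}$ coefficient inside the braces is a finite sum of constants times products $\prod_i \kappa_{j_i}(\hat\psi)$ with $\sum_i (j_i - 2) = k$, which is $O_p(n^{-k/2})$, so $a_k$ is that coefficient multiplied by $n^{k/2}$ and hence a fixed polynomial in the $O_p(1)$ normalised cumulants $n^{(j-2)/2}\kappa_j(\hat\psi)$, $j = 3,\dots,k+2$ --- just as $A_1 = -n^{1/2}\kappa_3(\hat\psi)/6$ and $B_1 = n\kappa_4(\hat\psi)/24 + 5 n\kappa_3^2(\hat\psi)/72$ in Lemma~\ref{Lemma:1}; likewise for $\tilde a_k$.

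For $r_{inf}$ I would then use $r_{inf} = r^{-1}\log(t/r)$ for the linear exponential family and $r_{inf} = r^{-1}\log(s/r)$ for the location-scale family: since $t/r - 1 = \sum_{k=1}^{m} a_k n^{-k/2} r^{k} + O_p(n^{-(m+1)/2}) = O_p(n^{-1/2})$, expanding $\log(1+x) = \sum_{l \ge 1}(-1)^{l+1}x^{l}/l$, truncating at $l = m$ and collecting powers of $r$ gives $\log(t/r) = \sum_{k=1}^{m} A_k n^{-k/2} r^{k} + O_p(n^{-(m+1)/2})$ with $A_k = O_p(1)$ by the same count; dividing by $r$ gives the first identity in (\ref{eq:cor_1}), and the location-scale version, with $s/r$ in place of $t/r$, is handled identically. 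For $r_{np}$ I would mirror the passage to (\ref{eq:sub_step}): expanding the total derivative, $\log\{|\info(\hat\psi,\hat\lambda)|/|\info(\psi_0,\hat\lambda_{\psi_0})|\} = \sum_{k \ge 1} \frac{(-1)^{k+1}}{k!}(\hat\psi - \psi_0)^{k}\gamma_k(\hat\psi)$; substituting $(\hat\psi - \psi_0) = t\{-\zeta_2(\hat\psi)\}^{-1/2}$, then the expansion of $t$ in $r$, and dividing by $2r$, yields $r_{np} = \sum_{k=1}^{m} B_k n^{-k/2} r^{k-1} + O_p(n^{-(m+1)/2})$, where $B_k$ is a fixed polynomial in the normalised cumulants $n^{(j-2)/2}\kappa_j(\hat\psi)$, the total derivatives $\gamma_1(\hat\psi),\dots,\gamma_k(\hat\psi)$, and $j_{\text{p}}(\hat\psi)/n$ together with $n/j_{\text{p}}(\hat\psi)$, all $O_p(1)$ under Assumptions~\ref{ass:smooth}--\ref{ass:eigen}; hence $B_k = O_p(1)$. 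The location-scale case differs only by the sign of this adjustment, as in Theorem~\ref{th:1}.

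The main obstacle is making this homogeneity bookkeeping airtight: one must verify both that the iterated substitution really stabilises after finitely many rounds and, more importantly, that every $A_k$ and $B_k$ is genuinely $O_p(1)$ rather than merely $o_p(n^{\epsilon})$. The cleanest way to organise this is to treat $n^{-1/2}$ as a formal indeterminate, run all of the above as identities between truncated formal power series in that indeterminate and in $r$ whose coefficients are polynomials in $\zeta_j(\hat\psi)/n$, $\gamma_k(\hat\psi)$ and $n/\{-\zeta_2(\hat\psi)\}$, check that the coefficient of each $n^{-k/2}r^{k-1}$ is homogeneous of the appropriate degree, and only then reinstate $n^{-1/2}$ and invoke the uniform order bounds of Assumptions~\ref{ass:smooth}--\ref{ass:eigen}; an induction on $m$ is an alternative. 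As in Theorems~\ref{th:1} and~\ref{th:2}, the division of the $O_p(n^{-(m+1)/2})$ remainder by $r$ is understood in the same formal sense, so that no additional difficulty arises near $r = 0$.
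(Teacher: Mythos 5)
Your proposal is correct and follows essentially the same route as the paper: the paper's own argument for this corollary is precisely the iterated-substitution scheme of (\ref{sub_1})--(\ref{sub_2}) extended to order $m$, with the same homogeneity bookkeeping --- $A_k$ is $n^{k/2}$ times a sum of products $\prod_l \hat\kappa_{i_l}$ over index sets with $\sum_l(i_l-2)=k$, and $B_k$ is built from $\gamma_{k-j}(\hat\psi)/\{-\zeta_2(\hat\psi)\}^{(k-j)/2}$ and products of the $A_{i_l}$, all $O_p(1)$ by Lemma~\ref{lemma:constrained_deriv}. Your write-up is in fact more explicit than the paper's sketch about why the substitution stabilises and why the logarithm preserves the form of the expansion, but no new idea is involved.
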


From the substitution argument employed in (\ref{sub_1}) to (\ref{sub_2}), we can deduce that the coefficient of $r^{k-1}$ are of order $O_p( n^{-k/2} )$ and 
\[A_k = n^{k/2}\sum_{j = 1}^k \sum_{\lbrace i_1, \cdots, i_j \rbrace \in S_j}  Z_{j} (\lbrace i_1, \cdots, i_j \rbrace) \prod_{l = 1}^j   \hat\kappa_{i_l} , \] 
where indices $\lbrace i_1, \cdots, i_j \rbrace$ take values in  $ \lbrace 3, \cdots, k + 2 \rbrace$, $S_j$ is the set of all indices such that $\sum_{l = 1}^j (i_l - 2) = k$, and $Z_{j}(\cdot)$ is a function of a set of indices that returns a numerical constant. 

From equation (\ref{r_np}), when grouping terms in powers of $r$,
we obtain the following expressions for the coefficients in the expansion of $r_{np}$:
\[ B_k = \sum_{j = 0}^{k - 1} n^{(k-j)/2} C_j \frac{\gamma_{k - j}(\hat\psi)}{ \lbrace-\zeta_2(\mle)\rbrace^{(k-j)/2} } \quad \text{ and } \quad C_k = \sum_{m = 1}^k \sum_{\lbrace i_1, \cdots, i_m \rbrace \in D_{m,k}} \prod_{l = 1}^m A_{i_l} ,\]
where $C_0 = 1$, the indices $\lbrace i_1, \cdots, i_m\rbrace$ range from $1$ to $j$. The set $D_{m,k}$ is the set of all indices $\lbrace i_1, \cdots, i_m \rbrace$ such that $\sum_{l = 1}^m i_l = k$. Some examples of terms which appear in $A_k$, $B_k$ and $C_k$ are given in Figure \ref{table:coef}.

\begin{figure}[h!]
\begin{tabular}{ |p{0.5cm}||p{4.5cm}|p{4.5cm}|p{4.5cm}|  }
 \hline
 $k$     & $A_k$ & $B_k$ & $C_k$ \\
 \hline
 1   & $n^{1/2}\kappa_3(\hat\psi)$    & $\frac{n^{1/2}\gamma_1(\hat\psi)}{ \{- \zeta_2(\hat\psi)\}^{1/2} }$ & $A_1$\\
 \hline 
 2&   $n\kappa_3^2(\hat\psi)$, $n\kappa_4(\hat\psi)$   &  $\frac{n^{1/2} C_1\gamma_1(\hat\psi)}{ \{- \zeta_2(\hat\psi)\}^{1/2} }$, $\frac{n\gamma_2(\hat\psi)}{ - \zeta_2(\hat\psi)\ }$   & $A_2$, $A_1^2$\\
 \hline 
 3 &  $n^{3/2}\kappa_3^3(\hat\psi)$, $n^{3/2}\kappa_3(\hat\psi)\kappa_4(\hat\psi)$ , $n^{3/2}\kappa_5(\hat\psi)$  & $\frac{C_2n^{1/2}\gamma_1(\hat\psi)}{ \{- \zeta_2(\hat\psi)\}^{1/2} } $, $\frac{C_1n\gamma_2(\hat\psi)}{ - \zeta_2(\hat\psi) }$, $\frac{n^{3/2}\gamma_3(\hat\psi)}{ \{- \zeta_2(\hat\psi)\}^{3/2} }$  &  $A_3$,$A_1A_2$, $A_1^3$  \\
 \hline
 4   & $n^2 \kappa_3^4(\hat\psi)$, $n^2\kappa^2_3(\hat\psi)\kappa_4(\hat\psi)$ , $n^2\kappa^2_4(\hat\psi)$, $n^2\kappa_5(\hat\psi) \kappa_3(\hat\psi)$, $n^2\kappa_6(\hat\psi)$  &$\frac{C_3n^{1/2}\gamma_1(\hat\psi)}{ \{- \zeta_2(\hat\psi)\}^{1/2} } $, $\frac{C_2n\gamma_2(\hat\psi)}{ \{- \zeta_2(\hat\psi)\} } $, $\frac{C_1n^{3/2}\gamma_3(\hat\psi)}{ \{- \zeta_2(\hat\psi)\}^{3/2} }$, $\frac{n^2\gamma_4(\hat\psi)}{ \{- \zeta_2(\hat\psi)\}^{2} }$&  $A_4$, $A_3A_1$, $A_2^2$, $A_1^2$, $A_2$, $A_1^4$ \\
 \hline
\end{tabular}
\caption{The first four terms of $A_k$, $B_k$ and $C_k$ for reference. The order of the $\kappa_j$ terms are given in Lemma 2 in \S\ref{Appendix}.1.} 
\label{table:coef}
\end{figure}

\section{$r^\star$ in High Dimensions}\label{Sec4}
We discuss the behaviour of $r^\star$ in the high dimensional setting, when $p$ increases with $n$.
Recently this asymptotic paradigm has become increasingly popular as the datasets observed in practice are increasing in size not only in terms of number of samples but also in number of covariates. 
It is known that some traditional statistics, such as the likelihood ratio test, does not perform well in this setting \citep{SurCandesPNAS}. Thus, it is reasonable to question how $r^\star$ performs in this setting. 
We quantify the behaviour of the order of the adjustment factors $r_{inf}$ and $r_{np}$ within this asymptotic regime, and show that the order is potentially much larger than in the $p$-fixed asymptotic regime. 
A similar analysis is performed in \cite{Tang_Reid}, studying the behaviour of $r_{inf}$ and $r_{np}$, although the approach differs from the one taken here as it relies on a direct expansion of the adjustment factor instead of the iterative substitution argument employed.
The results obtained from the two approaches coincide in the two families considered. 

In addition to the assumptions made in \S3, we assume that $j_{\psi\lambda}(\theta) = O_p(n^{1/2})$ for all $\theta \in \lbrace \theta : \norm{\theta - \theta_0}_2 < \delta \rbrace$ for some $\delta > 0$. This will be possible if the parametrization $(\psi, \lambda)$ is orthogonal in the \cite{adj_profile} sense. 
We also assume that the $k$-th  order derivatives of the constrained maximum likelihood estimate are $O_p(1)$ for $k \geq 2$.
We also place certain restrictions on the sizes of the eigenvalues of the third likelihood derivative matrices, these restrictions vary by the family we consider. 

\subsection{Linear Exponential Family}
In the linear exponential family, there exists an orthogonal parametrization $\hat{\lambda}_\psi = \hat\lambda$, so that under this parametrization $d\hat\lambda_\psi^k/d\psi^k = 0$ for for all integer-valued $k$ as the observed and expected information coincide. For this sub-section assume that the maximum singular value of $j_{\psi\lambda\lambda}(\hat\theta)$ is $O_p(n)$.
First consider the leading terms which appear in $r_{np}$ and $r_{inf}$ in (\ref{eq:exp_np}) and (\ref{eq:exp_inf}). We make the strong assumption that the higher-order terms are smaller. We find that, $\kappa_3(\hat\psi)$ and $\kappa_4(\hat\psi)$, as defined in (\ref{eq:kappa}), satisfy 

\[ \kappa_3(\hat\psi) = O_p(n^{-1/2}), \quad \kappa_4(\hat\psi) = O_p(n^{-1}),\]
which implies that
\[r_{inf} = O_p ( n^{-1/2}  ).\]

For $r_{np}$,  we have
\begin{align*}
\left| \gamma_1(\hat\psi) \right| = \left| \tr\left[ j_{\lambda\lambda}^{-1}(\hat\psi) j_{\psi\lambda\lambda}(\hat\psi) \right] \right| \leq \sigma_{\max}\lbrace j_{\psi\lambda\lambda}(\hat\theta) \rbrace \tr[j_{\lambda\lambda}^{-1}(\hat\theta)]  = O_p(p),
\end{align*}
using von Neumann's inequality and
\begin{align*}
\left|\gamma_2(\hat\psi)\right| &= \left| \tr\left[ j_{\lambda\lambda}^{-1}(\hat\psi) j_{\psi\lambda\lambda}(\hat\psi)j_{\lambda\lambda}^{-1}(\hat\psi) j_{\psi\lambda\lambda}(\hat\psi) \right] + \tr\left[  j_{\lambda\lambda}^{-1}(\hat\psi) j_{\psi\psi\lambda\lambda}(\hat\theta) \right] \right| \\
&\leq \sigma^2_{\max}\lbrace j_{\psi\lambda\lambda}(\hat\theta) \rbrace \tr[ \lbrace j_{\lambda\lambda}^{-1}(\hat\theta) \rbrace^2] +\norm{j^{-1}_{\lambda\lambda}(\hat\theta)}_F \norm{j_{\psi\psi\lambda\lambda}(\hat\theta)}_F\\
&\leq \sigma^2_{\max}\lbrace j_{\psi\lambda\lambda}(\hat\theta) \rbrace p \sigma_{\max}[\lbrace j_{\lambda\lambda}^{-1}(\hat\theta) \rbrace^2 ] + p^{1/2} \sigma_{\max}\lbrace j_{\lambda\lambda}^{-1}(\hat\theta) \rbrace \norm{j_{\psi\psi\lambda\lambda}(\hat\theta)}_F \\
&= O_p(p) + O_p(p^{3/2}).
\end{align*}
In the above we have used von Neumann's trace inequality, and the fact that the largest singular value is bounded by the Frobenius norm. This implies that
\[ r_{np} = O_p\left( pn^{-1/2} \right), \]
from Theorem \ref{th:1}.

If $p = o(n^{1/2})$, these results are consistent with \cite{Tang_Reid}. 
The above implies that $r$ coincides with $r^\star$ asymptotically in distribution if $p =o(n^{1/2})$.

\subsection{Location-Scale Family}
We only consider the leading terms which appear in $r_{inf}$ and $r_{np}$ in (\ref{eq:ls_inf}) and (\ref{eq:ls_np}), making the strong assumption that the other terms are of smaller order, as was done for the exponential family.
Under the orthogonal parametrization \citep[Lemma 1]{Tang_Reid}:
\[ \norm{\frac{d\hat\lambda_\psi}{d\psi}|_{\psi = \hat\psi} }_2 = O_p(p^{1/2}n^{-1/2}), \]
showing that $\kappa_3(\hat\psi) = O_p(p/n^{1/2})$, and $\kappa_4(\hat\psi) = O_p(p^2/n)$, which further implies that 
\[r_{inf} =  O_p\lbrace(  \max(p/n^{1/2}, p^2/n) \rbrace ,\] 
showing a dependence in $p$ not present in the linear exponential family.
Next we examine the size of $r_{np}$. As the derivatives of the constrained maximum likelihood estimate are not 0, we make the assumptions that  $\max_{i = 1, \cdots, p} \sigma_{\max}\{j_{\theta_i\lambda\lambda}(\hat\theta)\} = O_p(n)$. 
Then
\begin{align*}
|\gamma_1(\hat\psi)| &= \left|\tr\left[ j_{\lambda\lambda}^{-1}(\hat\theta) \frac{d}{d\psi} j_{\lambda\lambda}(\hat\theta_\psi)|_{\psi = \hat\psi} \right]\right| \leq \sigma_{\max} \left\lbrace \frac{d}{d\psi} j_{\lambda\lambda}(\hat\theta_\psi)|_{\psi = \hat\psi} \right\rbrace \tr[j_{\lambda\lambda}^{-1}(\hat\theta)] \\
&= \sigma_{\max}\left\lbrace j_{\psi\lambda\lambda}(\hat\theta) + \sum_{i = 1}^{p-1} \frac{\partial\hat\lambda_{\psi, i }}{\partial\psi} |_{\psi = \hat\psi} \ j_{\lambda_i\lambda\lambda}(\hat\theta) \right\rbrace \tr[j_{\lambda\lambda}^{-1}(\hat\theta)] \\
&\leq \left[ \sigma_{\max}\left\lbrace j_{\psi\lambda\lambda}(\hat\theta) \right\rbrace + \norm{ \frac{\partial\hat\lambda_{\psi, i }}{\partial\psi} |_{\psi = \hat\psi} }_1 \ \sigma_{\max}\left\lbrace  j_{\lambda_i\lambda\lambda}(\hat\theta) \right\rbrace \right] \tr[j_{\lambda\lambda}^{-1}(\hat\theta)]\\
&\leq \left[ \sigma_{\max}\left\lbrace j_{\psi\lambda\lambda}(\hat\theta) \right\rbrace + p^{1/2}\norm{\frac{\partial\hat\lambda_{\psi, i }}{\partial\psi} |_{\psi = \hat\psi}}_{2} \sigma_{\max}\left\lbrace  j_{\lambda_i\lambda\lambda}(\hat\theta) \right\rbrace \right] \tr[j_{\lambda\lambda}^{-1}(\hat\theta)]\\
&=   \lbrace O_p(n) +O_p(p n^{1/2}) \rbrace O_p(p/n) = O_p\left\{ \max (p, p^2/n^{1/2} ) \right\}.
\end{align*}
Also,
\begin{align*}
\left|\gamma_2(\hat\psi)\right| &= \left| \tr\left[  j_{\lambda\lambda}^{-1}(\hat\theta) \frac{d}{d\psi} j_{\lambda\lambda}(\hat\theta_\psi)|_{\psi = \hat\psi} \ j_{\lambda\lambda}^{-1}(\hat\theta) \frac{d}{d\psi} j_{\lambda\lambda}(\hat\theta_\psi)|_{\psi = \hat\psi} \right] + \tr\left[  j_{\lambda\lambda}^{-1}(\hat\theta) \frac{d^2}{d\psi^2} j_{\lambda\lambda}(\hat\theta_\psi)|_{\psi = \hat\psi} \right] \right| \\
&\leq \sigma^2_{\max}\left\lbrace  \frac{d}{d\psi} j_{\lambda\lambda}(\hat\theta_\psi)|_{\psi = \hat\psi} \right\rbrace \tr[ \lbrace j_{\lambda\lambda}^{-1}(\hat\theta) \rbrace^2]  +\tr[j^{-1}_{\lambda\lambda}(\hat\theta)] \sigma_{\max} \left\lbrace\frac{d^2}{d\psi^2} j_{\lambda\lambda}(\hat\theta_\psi)|_{\psi = \hat\psi} \right\rbrace\\
&= O_p\lbrace \max (p^2, p^4/n)\rbrace + O_p\lbrace \max( pn, p^2n^{1/2}, p^3)\rbrace.
\end{align*}
Detailed calculation of the maximum singular value of $d^2j_{\lambda\lambda}(\hat\theta_\psi)/d\psi^2 |_{\psi = \hat\psi}$ is deferred to \S\ref{appendix:max} 
If $p = o(n^{1/2})$, then
\begin{align*}
r_{np} = O_p(p/n^{1/2}),
\end{align*}
which agrees with the result in \citet[~\S5.2]{Tang_Reid}.

\section{$r$ as a series in $q$}\label{Sec5}
In this section, we obtain an expansion of $r$ as a polynomial in $q$, which is used to justify the normal approximation to the distribution of $r^\star$, see \citet[~\S 8]{2007applied}. 
We do not give the exact forms of the coefficients which appear in the polynomial expansion, as they are not as simple as those in \S\ref{Sec3}. 

\begin{theorem}
Under Assumptions \ref{ass:smooth}--\ref{ass:eigen},
for the linear exponential and location-scale families,
\[ r = q + \frac{A}{n^{1/2}}q^2 + \frac{B}{n}q^3 + O_p\left( n^{-3/2}  \right), \]
where $A$ and $B$ are $O_p(1)$.
\end{theorem}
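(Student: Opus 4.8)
The plan is to obtain the expansion of $r$ in $q$ by composing the expansions already developed in Lemma~\ref{Lemma:1} and Theorem~\ref{th:2}. Recall that $q = r^\star$'s multiplicative ingredient, and in both families $q$ is related to $t$ (exponential) or $s$ (location-scale) through the factor $\rho$, which is precisely what $r_{np}$ tracks. So the first step is to write, for the linear exponential family, $q = t\rho$, and note $\log \rho = r\, r_{np}$ with $r_{np}$ given by the expansion in Theorem~\ref{th:1}; combining this with the $t$-versus-$r$ expansion from Lemma~\ref{Lemma:1} gives $q$ as a power series in $r$ of the form $q = r\{1 + c_1 r/n^{1/2} + c_2 r^2/n + O_p(n^{-3/2})\}$ with $c_1, c_2 = O_p(1)$. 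The same is done in the location-scale family using $q = s/\rho$ together with the $s$-versus-$t$ and $t$-versus-$r$ expansions and the (sign-flipped) $r_{np}$ expansion.

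Next I would invert this series. Since $q = r + (c_1/n^{1/2}) r^2 + (c_2/n) r^3 + O_p(n^{-3/2})$ with $r = O_p(1)$, the standard Lagrange-style iterative substitution — exactly the device used to pass from (\ref{sub_1}) to (\ref{sub_2}) in the proof of Lemma~\ref{Lemma:1} — yields
\begin{align*}
r = q - \frac{c_1}{n^{1/2}} q^2 + \left( \frac{2 c_1^2 - c_2}{n} \right) q^3 + O_p\!\left( n^{-3/2} \right),
\end{align*}
so that $A = -c_1$ and $B = 2c_1^2 - c_2$, both $O_p(1)$ by Assumptions~\ref{ass:smooth}--\ref{ass:eigen} (which control the $\kappa_j(\hat\psi)$ and $\gamma_j(\hat\psi)$ appearing in $c_1, c_2$, via the orders recorded in the appendix lemma). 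One then checks that the error terms genuinely remain $O_p(n^{-3/2})$ after the substitutions: each substitution replaces a factor of $r$ by $q + O_p(n^{-1/2})$, and since the coefficients are $O_p(1)$ and the quantities being expanded are $O_p(1)$, no term is promoted in order. Because we are not asked for the explicit coefficients, I would stop at displaying their structure and citing the orders.

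The main obstacle is bookkeeping rather than conceptual: one must keep careful track of which expansions are in powers of $r$, $t$, $s$, or $q$, and compose them without dropping a cross term at order $n^{-1}$. In particular, the $r_{np}$ contribution enters $q$ at order $n^{-1/2}$ relative to $t$ (through $\tfrac12 \gamma_1(\hat\psi)/j_{\mathrm p}^{1/2}(\hat\psi)$), so it genuinely affects both $A$ and $B$ and cannot be neglected, unlike in the no-nuisance-parameter case of \cite{CAKMAK1998211}. A secondary point is that $q/r = \exp(r\, r^\star / \cdots)$-type identities must be handled through the logarithmic form of (\ref{eq:intro+rstar}) to be sure the expansion of $\log(q/r)$ in $r$ is consistent with the expansion of $q$ in $r$; doing both and cross-checking is the cleanest safeguard. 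Once the composition is organized, verifying $A, B = O_p(1)$ is immediate from the orders established earlier, so the proof is short modulo this algebra.
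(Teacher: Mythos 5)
Your proposal is correct and follows essentially the same route as the paper: both express $q=t\rho$ (resp.\ $s/\rho$) through the expansions of Lemma~\ref{Lemma:1} together with a Taylor expansion of $\rho$, and both conclude by an iterative substitution. The only organizational difference is that you build the series in powers of $r$ and then perform a series reversion to pass to powers of $q$, whereas the paper substitutes $t=q/\rho$ at the outset so that everything is already a series in $q$ and only a reciprocal of the form $\{1+Cq/n^{1/2}+Dq^2/n+O_p(n^{-3/2})\}^{-1}$ is needed at the end; both yield coefficients that are $O_p(1)$ by the orders of $\kappa_k(\hat\psi)$ and $\gamma_k(\hat\psi)$ established in Lemma~\ref{lemma:constrained_deriv}.
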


\begin{proof}
Note that $\rho^{-1} = O_p(1)$ by Assumption \ref{ass:eigen}. 
We prove the result in the case for the linear exponential family; the proof for the location-scale family is similar. 
We use capital letters to denote terms of order $O_p(1)$. From (\ref{r_inf_1}),
\begin{align}
t &= r \Blbrace 1 - \frac{\kappa_3(\hat\psi )}{6}t  + \frac{\kappa_4(\hat\psi)}{24} t^2  + \frac{\kappa^2_3(\hat\psi )}{24} t^2 +O_p(n^{-3/2}) \Brbrace\nonumber \\
&=  r \left[ 1 -  \frac{\kappa_3(\hat\psi )}{6\rho} q +   \left\lbrace \frac{1}{24\rho^2} \kappa_4(\hat\psi)  + \frac{1}{24\rho^2} \kappa^2_3(\hat\psi )\right\rbrace q^2 +O_p(n^{-3/2}) \right] \nonumber \\
&= r\left\lbrace 1 + \frac{C}{n^{1/2}}q + \frac{D}{n}q^2 + O_p(n^{-3/2})\right\rbrace. \label{eq:t_expansion}
\end{align} 
We expand $|j_{\lambda\lambda} (\hat\theta_{\psi_0})|$,
\begin{align*}
|j_{\lambda\lambda} (\hat\theta_{\psi_0})| &=  |j_{\lambda \lambda} (\hat\theta)| +  (\psi_0 - \hat\psi) \frac{d|j_{\lambda \lambda} (\hat\theta_{\psi_0})|}{d\psi}|_{\psi = \hat\psi} + \frac{1}{2} (\psi_0 - \hat\psi)^2 \frac{d|j_{\lambda \lambda} (\hat\theta_{\psi_0})|}{d\psi^2} |_{\psi =\hat\psi} + \cdots  \\
&= |j_{\lambda \lambda} (\hat\theta)| \left\lbrace 1 + (\hat\psi - \psi_0 )  \gamma_1(\hat\psi) + (\hat\psi - \psi_0 )^2 \gamma_2(\hat\psi)  + O_p(n^{-3/2}) \right\rbrace \\
&= |j_{\lambda \lambda} (\hat\theta)| \left[ 1 + \frac{\gamma_1(\hat\psi)}{\lbrace \rho j_\text{p}^{1/2}(\hat\psi) \rbrace} q   +  \frac{\gamma_2(\hat\psi)}{\rho^2 j_\text{p}(\hat\psi)} q^2   + O_p(n^{-3/2}) \right]\\
&= |j_{\lambda \lambda} (\hat\theta)| \left\lbrace 1 + \frac{E}{n^{1/2}}q + \frac{ F}{n}q^2  + O_p(n^{-3/2}) \right\rbrace.
\end{align*}
Therefore,
\begin{align}
\rho &=  \left\lbrace\frac{|j_{\lambda \lambda} (\hat\theta)|}{ |j_{\lambda \lambda} (\hat\theta_{\psi_0})|} \right\rbrace^{1/2} = \left[\frac{1}{ 1 + Cq/n^{1/2} + Dq^2/n + O_p(n^{-3/2}) } \right]^{1/2} \nonumber \\
 &= 1 + \frac{G}{n^{1/2}}q + \frac{ H}{n}q^2  + O_p(n^{-3/2}). \label{eq:rho_expansion}
\end{align}
Note that $\gamma_1(\psi)= O_p(1)$ and $\gamma_2(\psi)= O_p(1)$ by Lemma \ref{lemma:constrained_deriv} in \S7.2. Combining (\ref{eq:t_expansion}) and (\ref{eq:rho_expansion}), 
\begin{align*}
q &= t\rho = r\left\lbrace 1 + \frac{C}{n^{1/2}}q + \frac{D}{n}q^2 + O_p(n^{-3/2})\right\rbrace \left\lbrace 1 + \frac{G}{n^{1/2}}q + \frac{ H}{n}q^2  + O_p( n^{-3/2})   \right\rbrace.\\
r &= q \left\lbrace 1 + \frac{qC}{n^{1/2}} + \frac{q^2 D}{n}  + O_p(n^{-3/2})\right\rbrace^{-1} \left\lbrace 1 + \frac{G}{n^{1/2}}q + \frac{ H}{n}q^2  + O_p(n^{-3/2})   \right\rbrace^{-1} \\
&= q \left\lbrace 1 + \frac{A}{ n^{1/2} }q +\frac{B}{n}q^2 + O_p(n^{-3/2})\right\rbrace = q + \frac{A}{n^{1/2} }q^2 + \frac{ B}{n}q^3 + O_p(n^{-3/2}), 
\end{align*}
which shows the desired result.
For the location scale family, we use the same arguments and apply them to $s$ instead of $t$.
\end{proof}

\section{Simulations}\label{Sec6}
We perform numerical simulations to illustrate the results in \S3; for the linear exponential and location scale family
\begin{align}\label{eq:poly_r1}
r^\star   =  A^\star/n^{1/2} + (1+B^\star/n) r +  O_p(n^{-3/2}),
\end{align}
by Remark \ref{Remark:linearity}. To provide numerical evidence that (\ref{eq:poly_r1}) holds, note that, 
\begin{align}\label{eq:poly_r}
r^\star -  A^\star/n^{1/2} - (1+B^\star/n) r   =   O_p(n^{-3/2}),
\end{align}
and a sufficient condition for a random variable to be $O_p(n^{-3/2})$ is for both its mean and standard deviation to be $O(n^{-3/2})$. 
We illustrate the relationship in (\ref{eq:poly_r}) graphically by plotting the value of the mean and standard deviation of $r^\star - A^{\star} - (1 + B^{\star})r$ as a function of $n$; we expect the mean and standard deviation to roughly follow a linear trend when plotted against $\log(n)$, with slope $-3/2$ or smaller.

\subsection{Logistic Regression}
We first consider an example based on logistic regression in which there are $5$ covariates associated with each $y_i$, taken to be independent and identically distributed standard normals. The true regression coefficients are $\beta = (0, 1, 1,1,1)$, and the intercept is $\beta_0 = 1$. We are interested in testing for $H_0 : \beta_1 = 0$.
The number of samples is $n = 150, 300, 600, 1200, 2400$; we obtain estimates of $A^{\star}$ and $B^{\star}$ in  (\ref{eq:poly_r}) by numerical differentiation of the profile log-likelihood and the log-determinant of the information matrix. 
For each $n$, we simulate 2000 values of $r^\star - A^{\star} - (1 + B^{\star})r$ and plot the 95\% bootstrap confidence intervals of the empirical mean and standard deviation from 1000 bootstrap simulations. 
In Figure, \ref{fig:logistic_diff}, the mean and standard deviation of  $r^\star -A^{\star} - (1 + B^{\star})r$ are linear with slope $-3/2$, although the fit for the standard deviation is a bit off.

\begin{figure}[p]
    \centering
    \subfloat[Empirical mean of $r^\star -A^{\star} - (1+B^{\star}) r$.]{{\includegraphics[width=6cm]{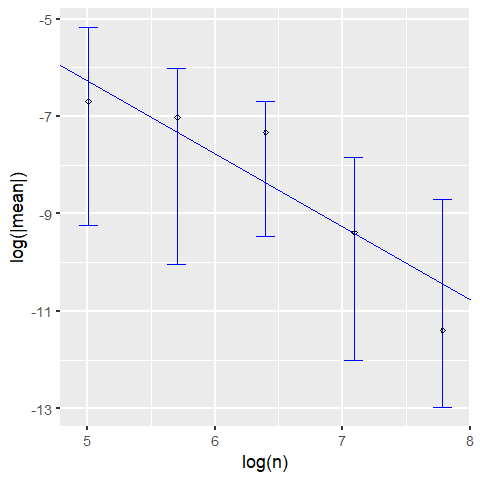} }}%
    \qquad
    \subfloat[Empirical standard deviation of $r^\star -A^{\star} - (1+B^{\star}) r$]{{\includegraphics[width=6cm]{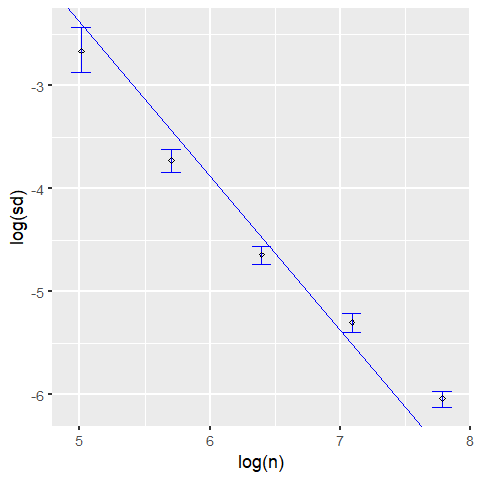} }}%
    \caption{Plots for logistic regression illustrating order of $r^\star -A^{\star} - (1+B^{\star}) r$. We see that both the mean and standard deviation roughly follows a slop of $-3/2$ as a function of $n$ on the log scale, suggesting that the difference is of order $O_p(n^{-3/2})$. The solid line has slope $-3/2$ and the intercept is fitted.}
    \label{fig:logistic_diff}%
\end{figure}

\subsection{Linear Regression with $t$ Errors}
We consider an example based on a location-scale regression model where the error follows a $t_5$-distribution. 
There are $5$ covariates $x_i$, which are taken to be independent and identically distributed standard normal. 
We are interested in testing for $H_0: \beta_1 = 0$ 
The true regression coefficients are $\beta = (0, 1, 1, 1,1)$, and the intercept is $\beta_0 = 1$. 
The number of samples is $n =150, 300, 600, 1200, 2400$. 
For each $n$, we simulate 2000 values of $r^\star - A^{\star} - (1 + B^{\star})r$ and plot the 95\% bootstrap confidence intervals of the empirical mean and standard deviation from 1000 bootstrap simulations. 
The results of the simulations are given in Figure \ref{fig:t_diff}.

\begin{figure}[p]
    \centering
    \subfloat[Empirical mean of $r^\star -A^{\star} - (1+B^{\star}) r$ .]{{\includegraphics[width=6cm]{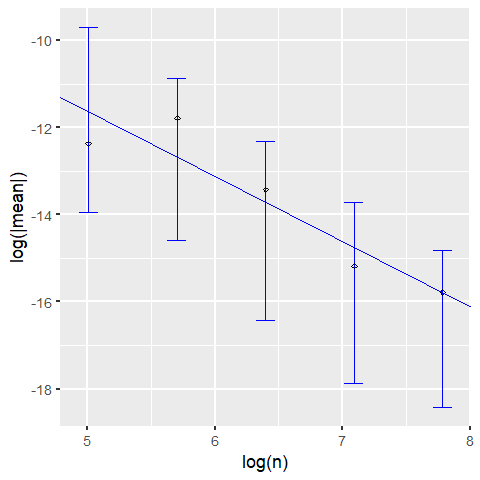} }}%
    \qquad
    \subfloat[Empirical standard deviation of $r^\star -A^{\star} - (1+B^{\star}) r$]{{\includegraphics[width=6cm]{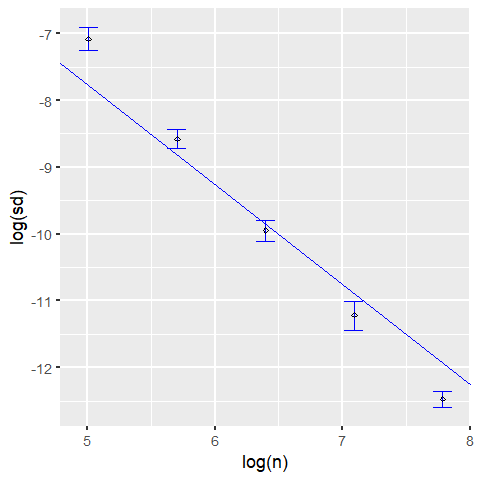} }}%
    \caption{Plots for $t_5$ based regression illustrating the order of $r^\star - A^{\star} - (1 + B^{\star})r$. We see that both the mean and standard deviation roughly follows a slope of $-3/2$ as a function of $n$ on the log scale, suggesting that the difference is of order $O_p(n^{-3/2})$. The solid line has slope $-3/2$ and the intercept is fitted.} 
    \label{fig:t_diff}%
\end{figure}

\section{Additional Proof Details}\label{Appendix}
\subsection{Order of $\gamma_k$ and $\kappa_k$}
We first establish the order of $\gamma_k(\hat\psi)$ and $\zeta_k(\hat\psi)$ for arbitrary integer $k$. 
All results are proved for $\psi$ and $\lambda$ scalar parameters; generalization to vector $\lambda$ is straightforward but notationally tedious.

\begin{lemma}\label{lemma:constrained_deriv}
For all integers $k$,

\begin{gather}
\frac{\partial^k\hat\lambda_\psi}{\partial\psi^k}|_{\psi = \hat\psi} = O_p(1), \quad \frac{d^k}{d\psi^k} \tilde\jmath_{\lambda\psi}|_{\psi = \hat\psi } = O_p(n), \quad \frac{d^k}{d\psi^k} \tilde \jmath_{\lambda\lambda}|_{\psi = \hat\psi} = O_p(n),\nonumber \\
 \kappa_k(\hat\psi) = O_p\lbrace n^{ -(k-2)/2}\rbrace, \quad \gamma_k(\hat\psi) = O_p(1). \nonumber
\end{gather}

\end{lemma}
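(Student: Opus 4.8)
\textbf{Proof proposal for Lemma~\ref{lemma:constrained_deriv}.}
The plan is to prove the five claims in order, since each feeds into the next. First I would handle $\partial^k\hat\lambda_\psi/\partial\psi^k|_{\psi=\hat\psi}=O_p(1)$ by implicit differentiation of the defining relation $l_\lambda(\psi,\hat\lambda_\psi)=0$. Differentiating once gives $l_{\lambda\psi}+l_{\lambda\lambda}\,(d\hat\lambda_\psi/d\psi)=0$, so $d\hat\lambda_\psi/d\psi=-l_{\lambda\lambda}^{-1}l_{\lambda\psi}$; evaluated at $\psi=\hat\psi$ the constrained and global estimates coincide, so the numerator is $O_p(n)$ by Assumption~\ref{ass:smooth} and the inverse is $O_p(n^{-1})$ by Assumption~\ref{ass:eigen}, giving $O_p(1)$. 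For higher $k$ I would differentiate repeatedly and argue by induction: each new derivative of the identity expresses $\partial^k\hat\lambda_\psi/\partial\psi^k$ as $-l_{\lambda\lambda}^{-1}$ times a polynomial in lower-order derivatives of $\hat\lambda_\psi$ (all $O_p(1)$ by the inductive hypothesis) and mixed partials of $l$ (all $O_p(n)$ by Assumption~\ref{ass:smooth}), with the $l_{\lambda\lambda}^{-1}$ factor $O_p(n^{-1})$; the orders cancel to leave $O_p(1)$.

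Next, for $\dfrac{d^k}{d\psi^k}\tilde\jmath_{\lambda\psi}|_{\psi=\hat\psi}=O_p(n)$ and the same for $\tilde\jmath_{\lambda\lambda}$, I would expand the total derivative via the chain rule: $d/d\psi$ acting on $j_{\lambda\psi}(\psi,\hat\lambda_\psi)$ produces $j_{\lambda\psi\psi}+j_{\lambda\psi\lambda}(d\hat\lambda_\psi/d\psi)$, and iterating $k$ times gives a sum of terms, each a partial derivative of $j$ (hence of $l$) of some order, multiplied by products of derivatives of $\hat\lambda_\psi$. By the first part every such product of $\hat\lambda_\psi$-derivatives is $O_p(1)$, and each partial derivative of $l$ evaluated at $\hat\theta$ is $O_p(n)$ by Assumption~\ref{ass:smooth}, so the whole expression is $O_p(n)$.

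Then $\kappa_k(\hat\psi)=O_p\{n^{-(k-2)/2}\}$ follows from its definition $\kappa_k=\zeta_k/\{-\zeta_2\}^{k/2}$ once I establish $\zeta_k(\hat\psi)=O_p(n)$ for all $k$ and $-\zeta_2(\hat\psi)=j_{\text p}(\hat\psi)\asymp n$. The bound $\zeta_k(\hat\psi)=O_p(n)$ is obtained in the same way as the $\tilde\jmath$ bounds: $\zeta_k=d^k l_{\text p}/d\psi^k = d^k\{l(\psi,\hat\lambda_\psi)\}/d\psi^k$, and chain-rule expansion yields partial derivatives of $l$ of order $O_p(n)$ times products of $\hat\lambda_\psi$-derivatives of order $O_p(1)$. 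That $-\zeta_2(\hat\psi)=j_{\text p}(\hat\psi)$ is of exact order $n$ uses Assumption~\ref{ass:eigen} together with the standard identity $j_{\text p}=j_{\psi\psi}-j_{\psi\lambda}j_{\lambda\lambda}^{-1}j_{\lambda\psi}$, whose eigenvalues are controlled by those of $j(\hat\theta)/n$ and $nj^{-1}(\hat\theta)$. Dividing, $\kappa_k=O_p(n)/O_p(n^{k/2})=O_p(n^{1-k/2})=O_p(n^{-(k-2)/2})$. Finally $\gamma_k(\hat\psi)=O_p(1)$: by definition $\gamma_k=d^k\log|j_{\lambda\lambda}(\psi,\hat\lambda_\psi)|/d\psi^k$, and using Jacobi's formula $d\log|M|/d\psi=\tr(M^{-1}dM/d\psi)$ repeatedly expresses $\gamma_k$ as a sum of traces of products of $j_{\lambda\lambda}^{-1}$ (each $O_p(n^{-1})$) with total derivatives of $j_{\lambda\lambda}$ (each $O_p(n)$ by the second part), where each term has equal numbers of $j_{\lambda\lambda}^{-1}$ and $j_{\lambda\lambda}$-derivative factors so the $n$-powers cancel; the trace of a product of $m$ matrices each of bounded operator norm over a $(p-1)$-dimensional space is $O_p(1)$ for fixed $p$.

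\emph{Main obstacle.} The bookkeeping in the chain-rule expansions is the delicate part: one must check that in every term produced by $k$-fold differentiation, the powers of $n$ from the $l$-derivatives and the $j_{\lambda\lambda}^{-1}$ factors cancel exactly, with no term of larger order surviving. This is cleanest to organise by a Fa\`a di Bruno–type induction, tracking simultaneously the order of $\partial^k\hat\lambda_\psi/\partial\psi^k$, of $\zeta_k$, and of the total derivatives of $\tilde\jmath$, so that each induction step can invoke the others at lower order. The excerpt's note that all this is "straightforward but notationally tedious" in the vector case means the scalar argument above must be written so the combinatorial structure, not the index gymnastics, carries the proof.
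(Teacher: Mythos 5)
Your proposal follows essentially the same route as the paper's proof: implicit differentiation of $\tilde l_\lambda=0$ with an induction on $k$ (the paper organises the derivatives of $(\tilde\jmath_{\lambda\lambda})^{-1}$ via Fa\`a di Bruno's formula and partial Bell polynomials, exactly the bookkeeping device you anticipate in your ``main obstacle'' paragraph), chain-rule expansion for the total derivatives of $\tilde\jmath_{\lambda\psi}$ and $\tilde\jmath_{\lambda\lambda}$, the ratio $\zeta_k/\{-\zeta_2\}^{k/2}$ for $\kappa_k$, and the iterated Jacobi trace formula $\gamma_k=\sum_{i+j=k,\,j\ge 1}\tr\bigl[\tfrac{d^i}{d\psi^i}(\tilde\jmath_{\lambda\lambda})^{-1}\tfrac{d^j}{d\psi^j}\tilde\jmath_{\lambda\lambda}\bigr]$ for $\gamma_k$. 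The argument is correct and matches the paper's in all essential steps.
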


\begin{proof}
Differentiating the expression
\begin{align*}
    0 &= \tilde{l}_{\lambda},
\end{align*}
we obtain, 
\begin{align}
    \tilde{\jmath}_{\lambda\lambda}\frac{\partial\hat\lambda_{\psi}}{\partial\psi} &=  -\tilde{\jmath}_{\psi \lambda} \label{eqn:2},
\end{align}    
thus, 
\begin{align}
    \frac{\partial\hat\lambda_{\psi}}{\partial\psi}|_{\psi = \hat\psi} &=  -\hat{\jmath}^{-1}_{\lambda\lambda} \hat{\jmath}_{\psi \lambda} = O_p(1) \label{eqn:3},
\end{align}
by Assumptions \ref{ass:smooth} and \ref{ass:eigen}, we use the above to show  
\[\frac{d}{d\psi} \tilde{\jmath}_{\lambda\lambda}|_{\psi = \hat\psi}  = \hat\jmath_{\psi\psi} + \frac{\partial\hat\lambda}{\partial\psi}|_{\psi = \hat\psi}  \ \hat\jmath_{\psi\lambda} = O_p(n), \] 
and by the same argument show that $d\tilde\jmath_{\lambda\psi}/d\psi|_{\psi = \hat{\psi}} = O_p(n)$. 
This proves the first three claims hold for $k = 1$.

\textbf{Induction Step:} 
We assume that for all $k_1 \leq k$ the result holds, we need to show that for $k + 1$

\[\frac{\partial^{k+1}\hat\lambda_\psi}{\partial\psi^{k+1}}|_{\psi = \hat\psi} = O_p(1),\quad \frac{d^{k+1}}{d\psi^{k+1}} \tilde{\jmath}_{\lambda\lambda}|_{\psi = \hat\psi} = O_p(n), \quad \frac{d^{k+1}}{d\psi^{k+1}} \tilde{\jmath}_{\lambda\psi}|_{\psi = \hat\psi} = O_p(n). \]

First, we differentiate (\ref{eqn:3}) $k$ times to obtain 
\[ \frac{\partial^{k+1}\hat\lambda_{\psi}}{\partial\psi^{k+1}} = - \sum_{i = 1}^{k} {k \choose i} \frac{d^i  (\tilde{\jmath}_{\lambda\lambda})^{-1}}{d\psi^{i}} \frac{d^{k - i }  \tilde{\jmath}_{\psi \lambda} }{d\psi^{k - i }}.   \]

Using Faa di Bruno's formula for the differentiation of a composition of functions we obtain 
\begin{align}
  \frac{d^i}{d\psi^i} (\tilde{\jmath}_{\lambda\lambda})^{-1} = \sum_{k = 1}^{i} (-1)^{k} k! \lbrace	\tilde{\jmath}_{\lambda\lambda}\rbrace^{(-k-1)} B_{i,k}\left( \frac{d}{d\psi} \tilde\jmath_{\lambda\lambda}, \cdots , \frac{d}{d\psi^{i - k +1}} \tilde\jmath_{\lambda\lambda}  \right), \label{eqn:bruno_di_faa} 
 \end{align}
where
 \[B_{i,k}(x_1, \cdots, x_{i - k +1}) = \sum \frac{i!}{j_1! j_2! \cdots j_{i-k+1}!}  \left( \frac{x_1}{1!} \right)^{j_1} \left(\frac{x_2}{2!}\right)^{j_2} \cdots \left(\frac{x_{i - k +1}}{(i - k +1)!}\right)^{j_{i - k +1}}, \]
and the summation in the above expression is taken over all sets of ${j_1, \dots, j_{i - k +1}}$ such that,
\begin{align*}
j_1 +j_2 +\cdot +j_{i - k+1} = k, \quad j_1 +2j_2 + \cdots + (i- k +1)j_{i - k +1} = i.
\end{align*}
The polynomials $B_{i,k}$ are the partial Bell polynomials. 
From (\ref{eqn:bruno_di_faa}) we deduce that $d (\tilde{\jmath}_{
\lambda\lambda})^{-1}/d\psi|_{\psi = \hat\psi} = O_p(n^{-1})$, since the constraint $j_1 +j_2 +\cdot +j_{n - l+1} = k$ implies that
\[B_{i,k}(d \tilde{\jmath}_{\lambda\lambda}/d\psi|_{\psi = \hat\psi}, \cdots, d^{i - k+1} \tilde{\jmath}_{\lambda\lambda}/d\psi^{i - k+1}|_{\psi = \hat\psi}) = O_p(n^k),
\]
 and $( \hat\jmath_{\lambda\lambda} )^{-k - 1} = O_p(n^{-k-1}) $, which implies that every term in the summation is $O_p(n^{-1})$.

Thus, by the induction assumption, we have $d^{k - i }  \tilde{\jmath}_{\psi \lambda} /d\psi^{k - i }|_{\psi = \hat\psi}  = O_p(n)$ for $i = 1, \cdots, k - 1$. 
Therefore we have the desired result for the constrained derivative of the maximum likelihood estimate. 
Next we show that 
\[ \frac{d^{k+1}}{d\psi^{k+1}} \tilde{\jmath}_{\lambda\lambda}|_{\psi = \hat\psi} = O_p(n), \quad \frac{d^{k+1}}{d\psi^{k+1}} \tilde{\jmath}_{\lambda\psi}|_{\psi = \hat\psi} = O_p(n). \]

For this,
\begin{align}
\frac{d^{k+1}}{d\psi^{k+1}} \tilde{\jmath}_{\lambda\psi}  = \sum_{i,j,l,m = 1}^{k+1} a_{i,j,l,m} \frac{\partial^{i + j} \tilde{\jmath}_{\psi\lambda} (\psi , \hat\lambda_\psi)}{\partial\psi^{i}\partial\lambda^{j}} \left( \frac{\partial^l\hat\lambda_\psi}{\partial\psi^l}  \right)^m  = O_p(n), \label{eq:chain_rule}
\end{align}   
which can be obtained through successive applications of the chain rule, some of the coefficients $a_{i,j,l,m}$ may be $0$. The result follows from the fact that all derivatives of the constrained maximum likelihood estimate are $O_p(1)$ up to the $(k+1)$ order when evaluated at $\hat\psi$ and log-likelihood derivatives are assumed to be $O_p(n)$ when evaluated at $\hat\theta$. 
A similar argument can be made for the derivatives of $\tilde{\jmath}_{\lambda\lambda}$. 

\textbf{Order of $\kappa_k(\psi)$}

The total derivative of the profile log-likelihood function is a summation of partial derivatives multiplied by the derivative of the constrained maximum likelihood estimate, so the result is obtained from arguments used in (\ref{eq:chain_rule}).

 

\textbf{Order of $\gamma_k(\psi)$}

We have 
\begin{align}
\gamma_k(\psi) = \sum_{i + j = k, \ j \geq 1} \tr\left[ \frac{d^i}{d\psi^i} (\tilde{\jmath}_{\lambda\lambda})^{-1} \frac{d^j}{d\psi^j} \tilde{\jmath}_{\lambda\lambda} \right]. \label{eq:gamma_k}
\end{align}  
 Using $d^i (\tilde{\jmath}_{\lambda\lambda})^{-1}/d\psi^i|_{\psi = \hat\psi} = O_p(n^{-i})$ from (\ref{eqn:bruno_di_faa}) and  $d^j \tilde\jmath_{\lambda\lambda} /d\psi^j|_{\psi = \hat\psi} = O_p(n)$ from (\ref{eq:chain_rule}) we conclude $\gamma_k(\hat\psi) = O_p(1)$.
\end{proof}
%
%
%

\subsection{Order of Maximum Singular Value in \S\ref{Sec4}}\label{appendix:max}
We obtain the order of the maximum singular value of the second derivative of the information matrix for the location-scale model in the high-dimensional setting. We have
\begin{align*}
\frac{d^2}{d\psi^2} \tilde\jmath_{\lambda\lambda}|_{\psi = \hat\psi} &=  \hat\jmath_{\psi\psi\lambda\lambda} + 2 \sum_{i = 1}^{p-1} \frac{\partial\hat\lambda_{\psi, i }}{\partial\psi} |_{\psi = \hat\psi} \ \hat\jmath_{\psi\lambda_i\lambda\lambda} +  \sum_{i = 1}^{p-1} \frac{\partial^2\hat\lambda_{\psi, i }}{\partial\psi^2} |_{\psi = \hat\psi} \ \hat\jmath_{\lambda_i\lambda\lambda}\\
&+ \sum_{i = 1}^{p-1} \sum_{j = 1}^{p-1} \frac{\partial\hat\lambda_{\psi, i }}{\partial\psi}|_{\psi = \hat\psi} \frac{\partial\hat\lambda_{\psi, j }}{\partial\psi} |_{\psi = \hat\psi} \ \hat\jmath_{\lambda_i\lambda_j\lambda\lambda}.
\end{align*} 
Now the maximal singular values of the matrices of interest are:
\begin{align*}
&\sigma_{\max} \lbrace \hat\jmath_{\psi\psi\lambda\lambda} \rbrace \leq \norm{ \hat\jmath_{\psi\psi\lambda\lambda}}_F = O_p(pn). \\
&\sigma_{\max} \left\lbrace\sum_{i = 1}^{p-1} \frac{\partial\hat\lambda_{\psi, i }}{\partial\psi} |_{\psi = \hat\psi} \ \hat\jmath_{\psi\lambda_i\lambda\lambda} \right\rbrace \\
&\leq \sum_{i = 1}^{p-1} \left| \frac{\partial\hat\lambda_{\psi, i }}{\partial\psi} |_{\psi = \hat\psi} \right| \sigma_{\max} \left\lbrace \hat\jmath_{\psi\lambda_i\lambda\lambda}\right\rbrace \\
&\leq \sum_{i = 1}^{p-1} \left| \frac{\partial\hat\lambda_{\psi, i }}{\partial\psi} |_{\psi = \hat\psi} \right| \norm{  \hat\jmath_{\psi\lambda_i\lambda\lambda}}_F \\
&\leq p^{1/2} \norm{\frac{\partial\hat\lambda_{\psi }}{\partial\psi} |_{\psi = \hat\psi}}_2 \max_{i= 1, \dots, p} \norm{  \hat\jmath_{\psi\lambda_i\lambda\lambda}}_F  = O_p(p^{2} n^{1/2}).\\
&\sigma_{\max} \left\lbrace \sum_{i = 1}^{p-1} \frac{\partial^2\hat\lambda_{\psi, i }}{\partial\psi^2} |_{\psi = \hat\psi} \ \hat\jmath_{\lambda_1\lambda\lambda} \right\rbrace\\
&\leq  \sum_{i = 1}^{p-1} \left|\frac{\partial^2\hat\lambda_{\psi, i }}{\partial\psi^2} |_{\psi = \hat\psi} \right| \ \sigma_{\max} \left\lbrace \hat\jmath_{\lambda_i\lambda\lambda} \right\rbrace = O_p(pn) \\
&\sigma_{\max} \left\lbrace \sum_{i = 1}^{p-1} \sum_{j = 1}^{p-1} \frac{\partial\hat\lambda_{\psi, i }}{\partial\psi}|_{\psi = \hat\psi} \frac{\partial\hat\lambda_{\psi, j }}{\partial\psi} |_{\psi = \hat\psi} \ \hat\jmath_{\lambda_i\lambda_j\lambda\lambda} \right\rbrace \\
&\leq  \sum_{i = 1}^{p-1} \sum_{j = 1}^{p-1} \left| \frac{\partial\hat\lambda_{\psi, i }}{\partial\psi}|_{\psi = \hat\psi}\right| \left| \frac{\partial\hat\lambda_{\psi, j }}{\partial\psi} |_{\psi = \hat\psi} \right| \ \sigma_{\max} \left\lbrace \hat\jmath_{\lambda_i\lambda_j\lambda\lambda} \right\rbrace\\
&\leq  \sum_{i = 1}^{p-1} \sum_{j = 1}^{p-1} \left|\frac{\partial\hat\lambda_{\psi, i }}{\partial\psi}|_{\psi = \hat\psi}\right| \left| \frac{\partial\hat\lambda_{\psi, j }}{\partial\psi} |_{\psi = \hat\psi} \right| \norm{ \hat\jmath_{\lambda_i\lambda_j\lambda\lambda}}_F \\
&= p \norm{\frac{\partial\hat\lambda_{\psi }}{\partial\psi}|_{\psi = \hat\psi}}_2^2 \max_{i,j = 1, \dots, p} \norm{ \hat\jmath_{\lambda_i\lambda_j\lambda\lambda}}_F = O_p(p^3).
\end{align*}
Using the triangle inequality results in the rates obtained for $\gamma_1(\hat\psi)$ and $\gamma_2(\hat\psi)$ given in \S 5.2.
\section*{Acknowledgements}

We thank Nicola Sartori, Michele Lambardi di San Miniato, Ioannis Kosmidis, Heather Battey and Micha\"el Lalancette for helpful discussions. This research was partially supported by the Natural Sciences and Engineering Research Council of Canada and the Vector Institute.
\par


\bibhang=1.7pc
\bibsep=2pt
\fontsize{9}{14pt plus.8pt minus .6pt}\selectfont
\renewcommand\bibname{\large \bf References}
\expandafter\ifx\csname
natexlab\endcsname\relax\def\natexlab#1{#1}\fi
\expandafter\ifx\csname url\endcsname\relax
  \def\url#1{\texttt{#1}}\fi
\expandafter\ifx\csname urlprefix\endcsname\relax\def\urlprefix{URL}\fi

\bibliographystyle{chicago}      
\bibliography{biblio}   

\end{document}